\documentclass{llncs}
\usepackage[utf8]{inputenc}

\usepackage{amsmath,amssymb,amsthm}
\usepackage{float}
\usepackage{graphicx}
\usepackage[vlined,ruled,linesnumbered]{algorithm2e}
\usepackage{url}

% Use the postscript times font!
\usepackage{times}
\usepackage{xspace}
%%%%%%%%%%%%%%%%%%%%%%%%%% General

%% general math
\newcommand{\A}{\mathcal{A}} 
 
\newcommand{\E}{\mathcal{E}} 
 
\newcommand{\I}{\mathcal{I}}

\renewcommand{\O}{\mathcal{O}} \renewcommand{\P}{\mathcal{P}}
 
\renewcommand{\S}{\mathcal{S}} \newcommand{\T}{\mathcal{T}}

 \newcommand{\X}{\mathcal{X}}
\newcommand{\Y}{\mathcal{Y}} \newcommand{\Z}{\mathcal{Z}}

%%\newcommand{\limp}{\supset}

%%\newcommand{\isdef}{\hbox{$\,\,\stackrel{{\scriptstyle def}}{=}\,\,$}}

%% LTL
%\newcommand{\Next}{\raisebox{-0.27ex}{\LARGE$\circ$}}

%%\newcommand{\Next}{\raisebox{0.4ex}{\tiny$\bigcirc$}}

%% Logics

\newcommand{\LTL}{{\sc ltl}\xspace}

\newcommand{\NNF}{{\sc nnf}\xspace}

%% Abbreviations

\newcommand{\Nat}{{\rm I\kern-.23em N}}

%%% Local Variables:
%%% mode: latex
%%% TeX-master: "main"
%%% save-place: t
%%% End:
\def\tt{\top}
\def\ff{\bot}
\def\X{\mathcal{X}}
\def\Y{\mathcal{Y}}

\begin{document}

\title{A Symbolic Approach to Safety \LTL Synthesis}

\author{%
Shufang Zhu\inst{1}  \and 
Lucas M. Tabajara\inst{2} \and
Jianwen Li\inst{2}\thanks{Corresponding author} \and 
Geguang Pu\inst{1}\thanks{Corresponding author} \and 
Moshe Y. Vardi\inst{2}
}%

\institute{East China Normal University, Shanghai, China
\and
Rice University, Texas, USA}

\maketitle

\begin{abstract}
Temporal synthesis is the automated design of a system that interacts with an environment, using the declarative specification of the system's behavior. A popular language for providing such a specification is Linear Temporal Logic, or \LTL. \LTL synthesis in the general case has remained, however, a hard problem to solve in practice. Because of this, many works have focused on developing synthesis procedures for specific fragments of \LTL, with an easier synthesis problem. In this work, we focus on Safety \LTL, defined here to be the Until-free fragment of \LTL in Negation Normal Form~(\NNF), and shown to express a fragment of safe \LTL formulas. The intrinsic motivation for this fragment is the observation that in many cases it is not enough to say that something ``good'' will eventually happen, we need to say by when it will happen. We show here that Safety \LTL synthesis is significantly simpler algorithmically than \LTL synthesis.
We exploit this simplicity in two ways, first by describing an explicit approach based on a reduction to Horn-SAT, which can be solved in linear time in the size of the game graph, and then through an efficient symbolic construction, allowing a BDD-based symbolic approach which significantly outperforms extant \LTL-synthesis tools.
%Specifically, we describe a BDD-based symbolic approach to Safety \LTL synthesis, which significantly outperforms extant \LTL-synthesis tools.
\end{abstract}

\section{Introduction}\label{sec:intro}

Research on synthesis is the culmination of the ideal of declarative programming. By describing a system in terms of what it should do, rather than how it should be done, we are able to simplify the design process while also avoiding human mistakes. In the framework defined by synthesis, we describe a specification of a system's behavior in a formal language, and the synthesis procedure automatically designs a system satisfying this specification~\cite{MannaWaldinger71}.
Reactive synthesis~\cite{PnueliR89} is one of the most popular variants of this problem, in which we wish to synthesize a system that interacts continuously with an environment. Such systems include, for example, operating systems and controllers for mechanical devices. To specify the behavior of such systems, we need a specification language that can reason about changes over time. A popular such language is Linear Temporal Logic, or \LTL \cite{Pnu77}.

Despite extensive research, however, synthesis from \LTL formulas remains a difficult problem. The classical approach is based on translating the formula to a deterministic parity automaton and reducing the synthesis problem to solving a parity game~\cite{PnueliR89}. This translation, however, is not only theoretically hard, given its doubly-exponential upper bound, but also inefficient in practice due to the lack of practical algorithms for determinization~\cite{DFogartyKVW13}. Furthermore, despite the recent quasi-polynomial algorithm~\cite{CaludeJKL017} for parity games, it is still not known if they can be solved efficiently. 
A promising approach to mitigating this problem consists of developing synthesis techniques for certain fragments of \LTL that cover interesting classes of specifications but for which the synthesis problem is easier. Possibly the most notable example is that of Generalized Reactivity(1) formulas, or GR(1)~\cite{gr1}, a fragment for which the synthesis problem can be solved in cubic time with respect to the game graph.

Here we focus on the Safety \LTL fragment, which we define to be the fragment of \LTL composed of Until-free formulas in Negation Normal Form (\NNF). Such formulas express~\emph{safety properties}, meaning that every violating trace has a finite bad prefix that falsifies the formula~\cite{KV01}.
The intrinsic motivation for this fragment is the observation that in many cases it is not enough to say that something “good” will eventually happen, we need to say by \emph{when} it will happen \cite{Lamport83}.
For this strict subset of \LTL, the synthesis problem can be reduced to a \emph{safety game}, which is far easier to solve. In fact, for such a game the solution can be computed in linear time with respect to the game graph~\cite{AlfaroHK98}. 
Some novel techniques for safety game solving have been developed in the
context of the Annual Synthesis Competition
(SyntComp)~\footnote{\url{http://www.syntcomp.org/}}, but there the input
consists of an AIGER model, while in this paper we are concerned with
synthesis from Safety \LTL formulas. See further discussion in the Concluding Remarks.
%safety synthesis is quite different from Safety \LTL synthesis. In safety synthesis, the input consists of an AIGER model, while in this paper the input is a Safety \LTL formula. Translating from Safety \LTL to AIGER can involve a doubly exponential blow-up, so reducing Safety \LTL synthesis to safety synthesis does not seem practical, thus this paper focuses only on Safety \LTL synthesis.

Our first contribution is a new solution to safety games by reducing to Horn satisfiability (Horn-SAT). There have been past works using SAT in the context of bounded synthesis~\cite{bloem2014sat}, 
%by unrolling a formula up to a specific bound to obtain a boolean formula that can be solved by SAT. 
but our approach is novel in using a reduction to Horn-SAT, which can be solved in linear time~\cite{DowlingG84}. Because, however, the Horn formula is proportional to the size of the state graph, in which the number of transitions is exponential in the number of input/output variables
and the number of states can be in the worst case doubly exponential in the size of the Safety \LTL formula, this approach becomes infeasible for larger instances. To avoid this problem, we pursue an alternative approach that uses a symbolic representation of the game graph via Binary Decision Diagrams (BDDs)~\cite{Bryant92}.

Symbolic solutions to safety games have played an important part in \LTL synthesis tools following the idea of \emph{Safraless synthesis}~\cite{KupfermanV05}, which avoids the high cost of determinization and the parity acceptance condition of classical \LTL synthesis by instead employing a translation to universal co-B\"{u}chi automata. Unbeast~\cite{Unbeast}, a symbolic BDD-based tool for bounded synthesis, decomposes the \LTL specification into safety and non-safety parts, using an incremental bound to allow the non-safety part to also be encoded as a safety game. Another tool, Acacia+~\cite{Acacia}, takes a bounded synthesis approach that allows the synthesis problem to be reduced to a safety game, then explores the structure of the resulting game to implement a symbolic antichain-based algorithm.

In the above approaches the safety game is constructed from a co-B\"{u}chi automaton of the \LTL specification. Our insight in this paper is that, since every bad trace of a formula in the Safety \LTL fragment has a finite prefix, we can construct from the negation of such a formula a deterministic finite automaton that accepts exactly the language corresponding to bad prefixes. This DFA can be seen as the dual of a \emph{safety automaton} defining a safety game over the same state space. Using a DFA as the basis for our safety game allows us to leverage tools and techniques developed for symbolic construction, determinization and minimization of finite automata.

Our symbolic synthesis framework is inspired by a recent approach~\cite{ZTLPV17} for synthesis of \LTL over finite traces.
This problem can be seen as the dual of Safety \LTL synthesis, and as such we can inter-reduce the realizability problem between the two by negating the result. Nevertheless, the strategy generation is irreducible since the two problems are 
solving the game for different players. Therefore, we modify the algorithm to produce a strategy for the safety game instead.
The procedure consists of %here can be divided into 
two phases. First we construct symbolically a safety automaton from the Safety \LTL formula %specification 
instead of direct construction. For that we present a translation from the negation of Safety \LTL to first-order logic over finite traces, which allows us to symbolically construct the dual DFA of the safety automaton. Second, we solve the safety game by computing the set of winning states through a backwards symbolic fixpoint computation, and then applying a boolean-synthesis procedure~\cite{DrLu.cav16} to symbolically construct a strategy.

In summary, our contribution in this paper is to introduce a fragment of \LTL called Safety \LTL and present two approaches for the synthesis problem for this fragment, an explicit one based on a reduction to Horn-SAT and a symbolic one exploiting techniques for symbolic DFA construction. 
Since Safety \LTL is a fragment of general \LTL, existing \LTL synthesis tools can likewise be used to solve the Safety \LTL synthesis problem. To demonstrate the benefits of developing specialized synthesis techniques, we perform an experimental comparison with Unbeast and Acacia+, both tools for general \LTL synthesis. 
Our results show that the explicit approach is able to outperform these tools when the formula is small, while the symbolic approach has the best performance overall.
\section{Preliminaries}\label{sec:pre}
\subsection{Safety/Co-safety \LTL}
Linear Temporal Logic (\LTL), first introduced in~\cite{Pnu77}, extends propositional logic by introducing temporal operators. Given a set $\P$ of propositions, the syntax of \LTL formulas is defined as~$\phi ::= \tt\ |\ \ff\ |\ p\ |\ \neg \phi\ |\ \phi_1\wedge\phi_2\ |\ X\phi\ |\ \phi_1 U \phi_2$.

$\tt$ and $\ff$ represent \textit{true} and \textit{false} respectively. $p\in \P$ is an 
\textit{atom}, and we define a literal $l$ to be an atom or the negation of an atom. $X$ (Next) and $U$ (Until)
are temporal operators. We also introduce the dual operator of $U$, namely $R$ (Release), defined as $\phi_1 R\phi_2\equiv \neg (\neg\phi_1 U\neg \phi_2)$. 
Additionally, we define the abbreviations $F\phi\equiv \tt U\phi$ and $G\phi\equiv \ff R\phi$. Standard boolean abbreviations, such as $\vee$ (or) and $\rightarrow$ (implies) are also used. An \LTL formula $\phi$ is \emph{Until-free}/\emph{Release-free} iff it does not contain the Until/Release operator. Moreover, we say $\phi$ is in Negation Normal Form (\NNF), iff all negation operators in $\phi$ are pushed only in front of atoms. 

A \textit{trace} $\rho = \rho_0\rho_1\ldots$ is a sequence of propositional interpretations (sets), in which 
$\rho_m\in 2^\P$ ($m \geq 0$) is the $m$-th interpretation of $\rho$, and $|\rho|$ represents the length of $\rho$. Intuitively, $\rho_m$ is interpreted as the set of propositions which are $true$ at instant $m$.
Trace $\rho$ is an \textit{infinite} trace if $|\rho| = \infty$, which is formally denoted as $\rho\in (2^\P)^{\omega}$. 
Otherwise $\rho$ is a \textit{finite} trace, denoted as $\rho\in (2^\P)^{*}$. \LTL formulas are interpreted over infinite traces. Given an infinite trace $\rho$ and an \LTL formula $\phi$, we inductively define when $\phi$ is $true$ in $\rho$ at step $i$ ($i \geq 0$), written $\rho, i \models \phi$, as follows: 
\begin{itemize}
  \item $\rho, i \models\tt$ and $\rho, i \not\models\ff$;
  \item $\rho, i \models p$ iff $p \in \rho_i$;
  \item $\rho, i \models \neg \phi$ iff $\rho,i \not\models \phi$;
  \item $\rho, i \models\phi_1 \wedge \phi_2$, iff $\rho,i \models \phi_1$ and $\rho, i \models \phi_2$;
  \item $\rho, i \models X\phi$, iff $\rho, i+1 \models \phi$;
  \item $\rho, i \models \phi_1 U \phi_2$, iff there exists  $j \geq i$ such that $\rho, j\models \phi_2$, and for all  $i \leq k < j$, we have $\rho, k \models \phi_1$.
%  \item $\rho, i \models \phi_1 R \phi_2$, iff for all $j \geq i$ if $\rho,j \nvDash \phi_2$, then there exists some $k, i \leq k < j$ such that $\rho, k \models \phi_1$.
\end{itemize}

An \LTL formula $\phi$ is $true$ in $\rho$, denoted by $\rho \models \phi$, if and only if $\rho, 0\models\phi$. 

Informally speaking, a \emph{safe} \LTL formula rejects traces whose ``badness'' follows from a finite prefix. Dually, a \emph{co-safe} \LTL formula accepts traces whose ``goodness'' follows from a finite prefix. Thus, $\phi$ is a safe formula iff $\neg\phi$ is a co-safe formula. To define the \emph{safe}/\emph{co-safe} formulas, we need to introduce the concept of \emph{bad}/\emph{good prefix}.

\begin{definition}[Bad/Good Prefix~\cite{KV01}]
Consider a language L of infinite words over $\P$. A finite word x over $\P$ is a bad/good prefix for L if and only if for all infinite words $y$ over $\P$, the concatenation $x \cdot y$ of $x$ and $y$ isn't/is in L.
\end{definition}

\noindent
Safe/co-safe \LTL formulas are defined as follows.

\begin{definition}[safe/co-safe~\cite{KV01}]\label{def:safe}
An \LTL formula $\phi$ is safe/co-safe iff every word that violates/satisfies $\phi$ has a bad/good prefix. 
\end{definition}

We use $pref(\phi)$ to denote the set of bad prefixes for safe formula $\phi$, equivalently, we denote by $co$-$pref(\neg \phi)$, the set of good prefixes for $\neg \phi$, which is co-safe. Indeed, $pref(\phi) = co$-$pref(\neg \phi)$~\cite{KV01}.

\begin{theorem}\label{thm:equiv}
An \LTL formula $\phi$ is safe iff $\neg \phi$ is co-safe, and each bad prefix for safe formula $\phi$ is a good prefix for $\neg \phi$.
\end{theorem}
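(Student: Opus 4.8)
The plan is to push everything back to the single semantic observation that negation complements the language of a formula: writing $L(\psi)\subseteq (2^\P)^\omega$ for the set of infinite traces $\rho$ with $\rho\models\psi$, the semantic clause ``$\rho,i\models\neg\psi$ iff $\rho,i\not\models\psi$'' immediately yields $L(\neg\phi) = (2^\P)^\omega\setminus L(\phi)$. Both conjuncts of Theorem~\ref{thm:equiv} will then follow by unwinding the definitions of bad/good prefix and of safe/co-safe.

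First I would prove the prefix correspondence in biconditional form: a finite word $x$ is a bad prefix for $L(\phi)$ if and only if $x$ is a good prefix for $L(\neg\phi)$. This is a direct rewriting: $x$ is a bad prefix for $L(\phi)$ iff $x\cdot y\notin L(\phi)$ for every infinite $y$, iff $x\cdot y\in (2^\P)^\omega\setminus L(\phi) = L(\neg\phi)$ for every infinite $y$, iff $x$ is a good prefix for $L(\neg\phi)$. The forward direction of this equivalence is exactly the second conjunct of the statement (``each bad prefix for $\phi$ is a good prefix for $\neg\phi$''), and the biconditional justifies the identification $pref(\phi) = co$-$pref(\neg\phi)$ used in the surrounding text.

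Second I would derive the safe/co-safe equivalence from the correspondence. By Definition~\ref{def:safe}, $\phi$ is safe iff every trace $\rho$ that violates $\phi$ has a bad prefix for $L(\phi)$, and $\neg\phi$ is co-safe iff every trace $\rho$ that satisfies $\neg\phi$ has a good prefix for $L(\neg\phi)$. Since $\rho$ violates $\phi$ exactly when $\rho$ satisfies $\neg\phi$, the two universal quantifications range over the very same set of traces; and by the previous step a finite prefix of such a $\rho$ is a bad prefix for $L(\phi)$ precisely when it is a good prefix for $L(\neg\phi)$. Hence the two conditions are literally the same assertion, so $\phi$ is safe iff $\neg\phi$ is co-safe.

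The whole argument is bookkeeping rather than mathematics; the only place that needs care is the nested quantifier structure concealed inside the phrase ``$\rho$ has a bad/good prefix'' — an existential over the finite prefixes of $\rho$, each of which itself carries a universal over infinite continuations — so that the biconditional on individual prefixes can be lifted to the statement about whole traces without slippage. I do not anticipate any genuine obstacle, and the finished proof should run to only a handful of lines.
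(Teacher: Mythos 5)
Your argument is correct and is exactly the definitional unwinding the paper relies on: the paper gives no explicit proof, citing the identity $pref(\phi) = co$-$pref(\neg\phi)$ from Kupferman--Vardi, and your two steps (language complementation gives the bad/good prefix biconditional, which then transfers the safe/co-safe conditions into one another) reconstruct precisely that reasoning. No gaps; the quantifier bookkeeping you flag is handled correctly.
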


Checking if a given \LTL formula is safe/co-safe is PSPACE-complete~\cite{KV01}. We now introduce a fragment of \LTL where safety/co-safety is a syntactical feature. 

\begin{theorem}[\cite{Sistla94}]
If an \LTL formula $\phi$ in \NNF is Until-free/Release-free, then $\phi$ is safe/co-safe.
\end{theorem}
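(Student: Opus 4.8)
The plan is to prove the statement by structural induction on the Until-free \NNF formula $\phi$, showing directly that every infinite trace violating $\phi$ has a bad prefix; the co-safe case for Release-free formulas then follows by the duality already recorded in Theorem~\ref{thm:equiv}, since the \NNF of $\neg\phi$ is Release-free exactly when $\phi$ is Until-free. So it suffices to handle safety. The key observation to extract from the induction is a uniform \emph{truncation} property: if $\rho\not\models\phi$, then there is an index $n$ such that \emph{every} infinite trace agreeing with $\rho$ on the first $n$ positions also violates $\phi$. This is strictly stronger than "has a bad prefix" only in bookkeeping; it is the shape that composes well under the connectives.

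First I would set up the induction on formulas of the form $\phi ::= \tt \mid \ff \mid p \mid \neg p \mid \phi_1\wedge\phi_2 \mid \phi_1\vee\phi_2 \mid X\phi \mid \phi_1 R \phi_2$ (the \NNF Until-free grammar, with $G\phi$ a special case of $R$). The base cases are immediate: $\ff$ is violated everywhere, so the empty prefix already witnesses badness; a literal $p$ or $\neg p$ evaluated at position $i$ depends only on $\rho_i$, so the prefix of length $i+1$ is bad; $\tt$ is never violated, so there is nothing to prove. For $X\phi$: if $\rho\not\models X\phi$ at position $i$, then the shifted trace violates $\phi$ at position $i+1$; by the induction hypothesis some prefix of length $n$ of the shifted trace is bad, so the prefix of length $n+1$ of $\rho$ is bad for $X\phi$. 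Conjunction is easy — a violation of $\phi_1\wedge\phi_2$ is a violation of one conjunct, hand it to the IH. Disjunction requires a violation of \emph{both} disjuncts simultaneously, yielding two bad prefixes; take the longer one, which is then bad for $\phi_1\vee\phi_2$ because extending a bad prefix keeps it bad.

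The main obstacle is the Release case $\phi_1 R \phi_2$, and this is where I would spend the real effort. Unwinding the semantics of $R$ via its definition $\neg(\neg\phi_1 U \neg\phi_2)$: the trace $\rho$ violates $\phi_1 R\phi_2$ at position $i$ iff there exists $j\ge i$ with $\rho,j\not\models\phi_2$ and for all $i\le k< j$, $\rho,k\not\models\phi_1$. The crucial point is that this is an \emph{existential} statement — a single witness index $j$ — so unlike the genuinely liveness-flavored $U$ on the other side, the violation is already "finitely caused." I would take $j$ to be such a witness, apply the induction hypothesis to the violation of $\phi_2$ at position $j$ to get a bad prefix of some length $n_2 \ge j+1$, apply it to each violation of $\phi_1$ at positions $k = i,\dots,j-1$ to get bad prefixes of lengths $n_1^{(k)}$, and set $n = \max(n_2, n_1^{(i)},\dots,n_1^{(j-1)})$. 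Then for any infinite trace $\rho'$ agreeing with $\rho$ on positions $0,\dots,n-1$: $\rho'$ still violates $\phi_2$ at $j$ and still violates $\phi_1$ at every $k<j$ (each by the corresponding IH prefix, all of which are $\le n$), so $j$ remains a witness and $\rho'\not\models\phi_1 R\phi_2$ at $i$. Hence the length-$n$ prefix of $\rho$ is bad. The only subtlety to check carefully is the indexing shift when $i>0$ versus the "$\rho\models\phi$ means $\rho,0\models\phi$" convention, and the degenerate sub-case $j=i$ (no $\phi_1$ obligations), both of which are routine once the $\max$ argument above is in place. Finally, I would note that $G\phi = \ff R\phi$ falls out as the instance where the "$\phi_1$" part is $\ff$, contributing no constraints, so a single violation of $\phi$ at some position $j$ already suffices — recovering the familiar fact that $G$ is a safety operator.
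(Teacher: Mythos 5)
Your proof is correct. The paper itself offers no proof of this theorem --- it is imported as a citation to Sistla~\cite{Sistla94} --- so there is no in-paper argument to compare against; what you have written is the standard self-contained justification, and it is sound. The two load-bearing points are exactly the ones you isolate: (1) the falsification condition for $\phi_1 R \phi_2$ at position $i$ is an \emph{existential} statement (a single finite witness $j$ with $\rho,j\not\models\phi_2$ and finitely many violations of $\phi_1$ before it), so a finite amount of the trace determines the violation, in contrast to falsifying $U$, which genuinely requires inspecting the whole infinite suffix; and (2) the duality reduction, using the fact that pushing negation through an Until-free \NNF formula yields a Release-free \NNF formula together with the paper's Theorem~\ref{thm:equiv}, so only the safety half needs the induction. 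Your ``truncation'' strengthening of the induction hypothesis is, as you say, only bookkeeping: a bad prefix for a language of infinite words remains bad under any finite extension, so ``has a bad prefix witnessing the violation at position $i$'' is already closed under taking maxima, which is all the $\vee$ and $R$ cases need. The remaining details you defer (the index shift for $i>0$, the degenerate case $j=i$) are indeed routine. No gaps.
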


Motivated by this theorem, we define now the syntactic fragment of
\emph{Safety/Co-Safety \LTL}.

\begin{definition}
Safety/Co-Safety \LTL formulas are in \NNF and Until-free/Release-free, respectively.
\end{definition}

\noindent
{\sf Remark}:
To the best of our knowledge, it is an open question whether every safe \LTL formula is equivalent to some Safety \LTL formula. We conjecture that this is the case.

\subsection{Boolean Synthesis}
In this paper, we utilize the \emph{boolean synthesis} technique proposed in~\cite{DrLu.cav16}. 

% \begin{definition}[Boolean Synthesis~\cite{DrLu.cav16}]
% Given two atom sets $\X, \Y$ and a Boolean formula $\phi$ over $\X\cup \Y$, the Boolean synthesis problem is to construct: 
% (1) a function $f:2^{\X}\rightarrow 2^{\Y}$, and (2) a formula $\phi_O$ over 
% $\X$ such that $\forall x\in \X\cdot (x\models\phi_O \leftrightarrow  (x\cup f(x))\models\phi)$. 
% The function $f$ is called the \emph{implementation function}, and the formula $\phi_O$ is called the 
% \emph{realizability formula}.
% \end{definition}

\begin{definition}[Boolean Synthesis~\cite{DrLu.cav16}]
Given two disjoint atom sets $\I, \O$ of input and output variables, respectively, and a boolean formula $\xi$ over $\I\cup \O$, the boolean-synthesis problem is to construct a function $\gamma:2^{\I}\rightarrow 2^{\O}$ such that, for all $I \in 2^\I$, if there exists $O \in 2^\O$ such that $I \cup O \models \xi$, then $I \cup \gamma(I) \models \xi$. 
We call $\gamma$ the \emph{implementation function}.
\end{definition}

We treat boolean synthesis as a black box, applying it to the key operation of Safety \LTL synthesis proposed in this paper. For more details on algorithms and techniques for boolean synthesis we refer to~\cite{DrLu.cav16}. 

%To describe conveniently, we use the abbreviation  $\langle f, \phi_O\rangle = BS(\phi, \X, \Y)$ to illustrate this technique. That is, by taking the inputs as $\phi, \X, \Y$, the Boolean synthesis engine $BS$ returns both $f$ and $\phi_O$ for our further processing.

\section{Safety \LTL Synthesis}\label{sec:theory}
In this section we give the definition of Safety \LTL synthesis. We then show how this problem can be modeled as a \emph{safety game} played over a kind of deterministic automaton, called a \emph{safety automaton}. In the following sections we describe approaches to construct this automaton from a Safety \LTL formula and solve the game that it specifies.

%We first give the definition of Safety \LTL synthesis. Then we provide an efficient method to convert Safety \LTL formulas to a kind of deterministic automaton, called safety automaton, which is a description of a game, called safety game. Solving the safety game is able to work out the Safety \LTL synthesis problem. Finally, we propose a symbolic approach to the safety game.

%For the controller moves first, then $\phi$ is true in the infinite trace $\rho= (X_0\cup g(\epsilon)), (X_1\cup g(X_0)), (X_2\cup g(X_0,X_1))\ldots$.
%For the environment moves first, $\phi$ is true in the infinite trace $\rho= (X_0\cup g(X_0)), (X_1\cup g(X_0,X_1)), (X_2\cup g(X_0,X_1,X_2))\ldots$.

\begin{definition}[Safety \LTL Synthesis]\label{def:syn}
Let $\phi$ be an \LTL formula over an alphabet $\mathcal{P}$ and $\X, \Y$ be two disjoint atom sets such that $\X \cup \Y = \mathcal{P}$. $\X$ is the set of \emph{input (environment) variables} and $\Y$ is the set of \emph{output (controller) variables}. $\phi$ is \emph{realizable} with respect to $\langle\X, \Y\rangle$ if there exists a strategy $g: (2^{\X})^* \rightarrow 2^{\Y}$, such that for an arbitrary infinite sequence $X_0,X_1,\ldots\in (2^{\X})^{\omega}$, $\phi$ is true in the infinite trace $\rho= (X_0\cup g(X_0)), (X_1\cup g(X_0,X_1)), (X_2\cup g(X_0,X_1,X_2))\ldots$. The \emph{synthesis} procedure is to compute such a strategy if $\phi$ is \emph{realizable}.
\end{definition}

There are two versions of the Safety \LTL synthesis, depending on the first player. Here we consider that the environment moves first, but the version where the controller moves first can be obtained by a small modification.

The Safety \LTL synthesis is a subset of \LTL synthesis by restricting the property to be a Safety \LTL formula. Therefore, we can use general \LTL-synthesis methods to solve the Safety \LTL synthesis problem. Classical approaches to \LTL synthesis problems involve two steps: 1) Convert the \LTL formula to a deterministic automaton; 2) Reduce \LTL synthesis to an infinite game over the automaton. We now present the automata corresponding to the class of Safety \LTL formulas.

\begin{definition}[Deterministic Safety Automata]
A deterministic safety automaton (DSA) is a tuple
$A^s = (2^\P, S, s_0, \delta)$, where $2^{\P}$ is the alphabet, $S$ is a finite set of states
with $s_0$ as the initial state, and $\delta : S \times 2^{\P} \rightarrow S$ is a partial transition function.
Given an infinite trace $\rho\in (2^{\mathcal{P}})^{\omega}$, a run $r$ of $\rho$ on $A^s$ is a sequence of states $s_0,s_1, s_2,\ldots$ such that $s_{i+1} = \delta (s_i,\rho_i)$. $\rho$ is accepted by $A^s$ if $A^s$ has an infinite run $r$ of $\rho$. 
\end{definition}

Note that in the definition, $\delta$ is a partial function, meaning that given $s\in S$ and $a\in 2^{\mathcal{\P}}$, $\delta (s, a)$ can either return a state $s'\in S$ or be undefined. Thus, an infinite run of $\rho$ on $A^s$ may not exist due to the possibility of $\delta(s_i,\rho_i)$ being undefined for some $(s_i, \rho_i)$. 
A DSA is essentially a deterministic B\"{u}chi automaton (DBA)~\cite{Buchi} with a partial transition function and a set of accepting states $F = S$.

%\subsection{Safety Game over Deterministic Safety Automata}
\emph{Deterministic safety games} are games between two players, the environment and the controller, played over a DSA. We have two disjoint sets of variables $\mathcal{X}$ and $\mathcal{Y}$. $\mathcal{X}$ contains uncontrollable variables, which are under the control of the environment. $\mathcal{Y}$ contains controllable variables, which are under the control of the controller. A \emph{round} consists of both the controller and the environment setting the value of the variables they control. A \emph{play} of the game is a word $\rho \in (2^{\mathcal{X} \cup \mathcal{Y}})^\omega$ that describes how the environment and the controller set values to the variables during each round. A \emph{run} of the game is the corresponding sequence of states through the play.
The \emph{specification} of the game is given by a deterministic safety automaton $A^s = (2^{\mathcal{X} \cup \mathcal{Y}}, S, s_0, \delta)$.

A \emph{winning} play for the controller is an infinite sequence %that can be 
accepted by $A^s$. A \emph{strategy} for the controller is a function $f : (2^{\mathcal{X}})^* \rightarrow 2^{\mathcal{Y}}$ such that given a history of the setting of the environmental variables, $f$ determines how the controller set the controllable variables in $\mathcal{Y}$. A strategy is a \emph{winning strategy} if starting from the initial state $s_0$, for every possible sequence of assignments of the variables in $\X$, it leads to an infinite run. Checking the existence of such a \emph{winning strategy} counts for the \emph{realizability} problem. 

Safety games can be seen as duals of reachability games, where reachability games are won by reaching a set of winning states, while safety games are won by avoiding a set of losing states. Safety games however cannot be reduced to reachability games. The realizability problem of safety game can indeed be reduced to that of reachability game since the two are dual and the underlying game is determined, but this does not work for strategy generation. Safety game does not generate a winning strategy for the environment if it is unrealizable.
It is known that reachability games can be solved in linear time in the size of the game graph~\cite{AlfaroHK98}. One of the ways to do this is by a reduction to Horn Satisfiability, which can be solved in linear time~\cite{DowlingG84}. In the next section we present such a reduction.

%Safety game can be solved in \emph{linear time} on the state space, since computing $Win(A^s)$ can be implemented to run in time linear in the size of the game~\cite{AlfaroHK98}. Also, safety game is dual to reachability game, which can be reduced to Horn Satisfiability that can be solved in linear time~\cite{DowlingG84}. The algorithm can be implemented symbolically as well.
\section{Explicit Approach to Safety Synthesis}\label{sec:horn_sat}

We now show how to solve safety games by reducing to Horn satisfiability 
(Horn-SAT), a variant of SAT where every clause has at most one
positive literal. Horn-SAT is known to be solvable in linear time using 
constraint propagation, cf.~\cite{DowlingG84}. Modern SAT solvers use
specialized data structures for performing very fast constraint propagation \cite{MZ09}. 

From a DSA $A^s = (2^{\X \cup \Y}, S, s_0,\delta)$ defining a safety game, 
%where $\rho:S\times 2^\Sigma 2^\Delta \rightarrow S$ 
%is a partial function and the environment moves first.\\
we construct a Horn formula $f$ such that the game is winning for the system
if and only if $f$ is satisfiable. Then, from a satisfying assignment of $f$
we can extract a winning strategy. We now describe the construction of the
Horn formula.
There are three kinds of Boolean variables in $f$: (1) state variables: $p_s$ for each state $s \in S$; (2) state-input variables: $p_{(s,X)}$ for each state $s \in S$ and $X \in 2^{\X}$; (3) state-input-output variables: $p_{(s,X,Y)}$ for each state $s \in S$, $X \in 2^{\X}$, and $Y \in 2^{\Y}$.

We first construct a non-Horn boolean formula $f'$, then we show how to obtain a Horn formula $f$ from $f'$.
The intuition of the construction is that first, $s_0$ must be a winning 
state. Then, for every winning state, for all inputs there should exist 
an output such that the corresponding successor is a winning state.

Let $n$ represent the number of possible output 
assignments: $2^\Y=\{Y_1,\ldots,Y_n\}$, $n = 2^{|\Y|}$.
$f'$ is a conjunction of $p_{s_0}$ with the following constraints 
for each state $s \in S$:
(1) $p_{s} \rightarrow p_{(s,X)}$,
for each $X\in 2^\X$ \label{item:states};
(2) $p_{(s,X)} \rightarrow 
\left(p_{(s,X,Y_1)} \vee p_{(s,X,Y_2)} \vee \ldots \vee p_{(s,X,Y_n)}\right)$,
for each $X\in 2^\X$ \label{item:outputs};
(3) $p_{(s,X,Y)} \rightarrow p_{\delta(s,X,Y)}$, 
for each $X \in 2^{\X}$, $Y \in 2^{\Y}$,
if  $\delta(s,X,Y)$ is well defined \label{item:defined};
and (4) $\neg p_{(s,X,Y)}$,
for each $X \in 2^{\X}$, $Y \in 2^{\Y}$,
if  $\delta(s,X,Y)$ is undefined. \label{item:undefined}

\begin{theorem}
The formula $f'$ is satisfiable with assignment $\alpha'$ iff the safety game
over $A^s$ is realizable and $\alpha'$ encodes a winning strategy.
\end{theorem}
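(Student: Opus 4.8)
The plan is to show that the satisfying assignments of $f'$ are exactly the encodings of a \emph{winning region} of the safety game equipped with a choice of winning move at each of its states, and then to invoke the standard fixpoint characterization of safety games to identify such regions with realizability.

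\emph{From a model of $f'$ to a winning strategy.} Given a satisfying assignment $\alpha'$, let $W = \{\, s \in S : \alpha'(p_s) \text{ is true}\,\}$. The key lemma is that $W$ is \emph{closed under controllable predecessors}: $s_0 \in W$, and for every $s \in W$ and $X \in 2^{\X}$ there is $Y \in 2^{\Y}$ with $\delta(s,X,Y)$ defined and $\delta(s,X,Y) \in W$. This is a short propagation through the four clause families: $s \in W$ makes $p_s$ true, so clause~(1) makes every $p_{(s,X)}$ true, so clause~(2) forces some disjunct $p_{(s,X,Y_j)}$ true; clause~(4) then forbids $\delta(s,X,Y_j)$ from being undefined, and clause~(3) forces $p_{\delta(s,X,Y_j)}$ true, i.e.\ $\delta(s,X,Y_j)\in W$; the top-level conjunct $p_{s_0}$ gives $s_0\in W$. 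Choosing for each $s\in W$ and $X$ one such witness $g(s,X)$ (say the least-indexed $Y$ with $p_{(s,X,Y)}$ true) yields a positional strategy, and the corresponding history-based $f:(2^{\X})^*\to 2^{\Y}$ is obtained by tracking the deterministic current state: it starts in $s_0\in W$ and, by closure, never leaves $W$ and never hits an undefined transition, so every play consistent with $f$ has an infinite run and is accepted by $A^s$. Hence the game is realizable and $\alpha'$, through its state-input-output variables, encodes the winning strategy $f$.

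\emph{From realizability to a model of $f'$.} Conversely, suppose the game is realizable and let $W^\star$ be the set of states from which the controller has a winning strategy. A residual-strategy argument shows $W^\star$ is closed under controllable predecessors: for $s\in W^\star$ and $X$, the first response $Y$ of a winning strategy keeps $\delta(s,X,Y)$ defined and leaves a winning residual from $\delta(s,X,Y)$, so $\delta(s,X,Y)\in W^\star$; realizability gives $s_0\in W^\star$. Fixing a memoryless selector $g^\star$ witnessing closure, define $\alpha'$ by setting $p_s$ true iff $s\in W^\star$, $p_{(s,X)}$ true iff $s\in W^\star$, and $p_{(s,X,Y)}$ true iff $s\in W^\star$ and $Y=g^\star(s,X)$. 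Each clause family and the conjunct $p_{s_0}$ is then immediate from closure (clause~(4) holds because $\delta(s,X,g^\star(s,X))$ is always defined, clause~(3) because that successor lies in $W^\star$), so $\alpha'\models f'$ and $\alpha'$ visibly encodes $g^\star$.

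\emph{Main obstacle.} The only step that goes beyond bookkeeping is the converse: recognizing that the controller's winning set is precisely a set closed under controllable predecessors, which is the classical fixpoint characterization of safety games (cf.~\cite{AlfaroHK98}). Given that, every remaining implication is a direct syntactic match between the clauses of $f'$ and the definition of $W$, and the strategy encoded by $\alpha'$ is literally the collection of true state-input-output variables. A secondary care point is to pin down what "$\alpha'$ encodes a winning strategy" means; I would fix it to the correspondence used above — for each winning state $s$, the true variables $p_{(s,X,\cdot)}$ list the legal controller moves at $(s,X)$ — so that both directions of the biconditional concern the same object.
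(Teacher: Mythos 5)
Your proof is correct and follows essentially the same route as the paper's: clause-by-clause propagation to show that the set of true $p_s$ variables is closed under controllable predecessors, and in the converse direction building the assignment from a memoryless winning strategy over the winning region. You are somewhat more explicit than the paper about why a positional strategy and a closed winning set exist (the fixpoint characterization of safety games) and about what ``encodes a winning strategy'' means, which is a welcome tightening but not a different argument.
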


\begin{proof}
If $f'$ is satisfiable with assignment $\alpha'$, there is a set
$C \subseteq S$ of states, where for each state $s \in C$, it is the case 
that $p_s$ is true in $\alpha'$. Then, by clauses of type (1), given a state
$s \in C$, for all inputs $X \in 2^\X$, it is the case that $p_{(s,X)}$ is
also true in $\alpha'$. Furthermore, by clause of type (2), there must be 
some output $Y\in 2^\Y$ such that $p_{(s,X,Y)}$ is true in $\alpha'$.
Since $p_{(s,X,Y)}$ is true, there cannot be a clause $\neg p_{(s,X,Y)}$ of
type (4), and therefore it is the case that $\delta(s,X,Y)$ is well defined
and, by clause of type (3), $p_{\delta(s,X,Y)}$ is also true in $\alpha'$.
This means that we have a wining strategy such that all states in
$C$, including $s_0$, are winning. In response to input
$X\in 2^\X$, the system outputs $Y\in 2^\Y$
such that $p_{(s,X,Y)}$ is true in $\alpha'$, and this ensures that
the successor state $\delta(s,X,Y)$ is also in $C$.

If the safety game over $A^s$ is realizable, then there is a 
winning strategy $g:S\times 2^\X\rightarrow 2^\Y$ and a set 
$C \subseteq S$, containing $s_0$, of winning states such that for each 
state $s\in C$ and input $X\in 2^\X$, the output $Y=g(s,X)$ is such that $\delta(s,X,Y)\in C$. Then the truth assignment $\alpha'$
that makes $p_s$ true iff $s\in C$, and makes $p_{(s,X)}$ and
$p_{(s,X,g(s,X))}$ true for all $s\in C$ and $X\in 2^\X$
is a satisfying assignment of $f$.
\end{proof}

We now transform the formula $f'$ to an equi-satisfiable formula $f$ 
that is a Horn formula (in which every clause contains at most one
positive literal).
% We introduce.  new boolean variables 
% $q_s$, $q_{(s,X)}$, and $q_{(s,X,Y)}$ for $s\in S$, $X\in 2^\X$, 
% and $Y\in 2^\Y$.
We replace each variable $p_s$, $p_{(s,X)}$, and $p_{(s,X,Y)}$ in $f'$ by
its negative literal $\neg p_s$, $\neg p_{(s,X)}$, and 
$\neg p_{(s,X,Y)}$, respectively. We can then rewrite each
constraint $(\neg p_s \rightarrow \neg p_{(s,X)})$ as $(p_{(s,X)}\rightarrow p_s)$. Similarly, we can rewrite $(\neg p_{(s,X)} \rightarrow 
(\neg p_{(s,X,Y_1)} \vee \ldots \vee \neg p_{(s,X,Y_n)}))$ as the equivalent constraint
$((p_{(s,X,Y_1)} \wedge \ldots \wedge p_{(s,X,Y_n)})\rightarrow p_{(s,X)})$.
%Since $f$ is obtained from $f'$ by a simple polarity-flipping
%substitution followed by equivalence-preserving rewriting, we have
$f$ is equivalent to $f'$ with the polarity of the literals flipped, therefore we have that $f$ is equi-satisfiable to $f'$. Given a satisfying assignment
$\alpha$ for $f$, we obtain a satisfying assignment $\alpha'$ for
$f'$ by, for every variable $p$, assigning $p$ to be true 
in $\alpha'$ iff $p$ is assigned false in $\alpha$. 

Since $f$ is a Horn formula, we can obtain a winning strategy in linear
time. Note, however, that $f$ is constructed from an explicit
representation of the DSA $A^s$, as a state graph with one transition per
assignment of the input and output variables. The challenge for this approach is the blow-up in the size of the state graph with respect to the 
input temporal formula. To address this challenge we need to be able to express the state graph more succinctly.
%Since the number of transitions is exponential in the number of variables,
%and the number of states is potentially doubly-exponential, 
%even if the synthesis is in linear time on the size of the graph, this 
%approach is not scalable.

Therefore, we present an alternative approach for solving safety games using a symbolic representation of the state graph. Although the algorithm is no longer linear, not having to use an explicit representation of the game makes up for that fact. In order to construct this symbolic representation efficiently, we exploit the fact that safety games are dual to reachability games played over a DFA, allowing us to use techniques for symbolic construction of DFAs. This construction is described in the next section.
\section{Symbolic Approach to Safety Synthesis}

In order to perform Safety-\LTL synthesis symbolically, the first step is to construct a symbolic representation of the DSA from the Safety-\LTL formula. The following section explains how we can achieve this. The key insight that we use is that a symbolic representation of the DSA can be derived from the symbolic representation of the DFA encoding the set of bad prefixes of the Safety-\LTL formula, allowing us to exploit techniques for symbolic DFA construction. After this, we describe how we can, from this representation, symbolically compute the set of winning states of the safety game, and then extract from them a winning strategy using boolean synthesis.

\subsection{From Safety \LTL to Deterministic Safety Automata} \label{sec:construction}
In this section, we propose a conversion from Safety \LTL to DSA. The standard approach to constructing deterministic automata for \LTL formulas is to first convert an \LTL formula to a nondeterministic B\"uchi automaton using tools such as SPOT~\cite{spot}, LTL2BA~\cite{ltl2ba}, and then apply a determinization construction, e.g., Safra's construction~\cite{Safra88}.
The conversion from \LTL to deterministic automata, however, is intractable in practice, not only because of the doubly-exponential complexity, but also the non-trivial construction of both Safra~\cite{Safra88} and Safraless~\cite{KupfermanV05} approaches. Therefore, \LTL synthesis is able to benefit from a better automata construction technique. One of the contribution in this paper is proving such a technique which efficiently constructs the corresponding safety automata of Safety \LTL formulas.
The novelty here is a much simpler conversion, thus yielding a more efficient synthesis procedure.

% SPOT~\cite{spot} has implemented the construction from Safety \LTL to DBA shown in the proof of Theorem~\ref{thm:ltl2dsa}, and once the DBA is acquired, the DSA can be obtained in a trivial way. However in this paper,

Since every trace rejected by a DSA $A^s$ can be rejected in a finite number of steps, we can alternatively define the language accepted by $A^s$ by the finite prefixes that it rejects. This allows us to work in the domain of finite words, which can be recognized much more easily, using deterministic finite automata. Therefore, a DSA can be seen as the dual of a DFA over the same state space. Given a DFA $D = (2^{\P}, S_d, s_0, \lambda, F_d)$, the corresponding DSA $A^s = (2^{\P}, S, s_0, \delta)$ can be generated by following steps: 1) $S = S_d \backslash F_d$; 2) For $s \in S, a \in 2^{\P}$, if $\lambda(s,a) = s' \in S$, then $\delta(s,a) = s'$, otherwise $\delta(s, a)$ is undefined.

%Our approach to constructing the DSA for a Safety \LTL formula is inspired by the following theorem. 
\begin{theorem}[\cite{KV01}]\label{thm:safe}
Given a Safety \LTL formula $\phi$, there is a DFA $A_{\phi}$ which accepts exactly the finite traces that are bad prefixes for $\phi$.
\end{theorem}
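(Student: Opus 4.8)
The plan is to reduce the statement to the regularity of a single language of finite words and then invoke standard automata constructions. By the theorem of~\cite{Sistla94} quoted above, $\phi$ being Until-free in \NNF is safe, so $pref(\phi)$ is exactly the set of finite words $x\in(2^{\P})^*$ such that $x\cdot y\not\models\phi$ for every $y\in(2^{\P})^{\omega}$; equivalently, by Theorem~\ref{thm:equiv}, it coincides with $co$-$pref(\neg\phi)$. (This observation is only needed to see that $pref(\phi)$ is a meaningful object; the construction below actually works for an arbitrary \LTL\ formula.) Writing $\mathit{Good}=\{x\in(2^{\P})^*:\exists\,y\in(2^{\P})^{\omega},\ x\cdot y\models\phi\}$, we have $pref(\phi)=(2^{\P})^*\setminus \mathit{Good}$, so it suffices to exhibit a finite automaton for $\mathit{Good}$; then $A_{\phi}$ is obtained by determinizing it and interchanging accepting and non-accepting states.

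For $\mathit{Good}$, first apply the standard translation of \LTL\ to nondeterministic \buchi automata to obtain $\mathcal{B}_{\phi}$ with initial states $Q_0$, state set $Q$, and $L(\mathcal{B}_{\phi})=\{\sigma\in(2^{\P})^{\omega}:\sigma\models\phi\}$. Then prune from $\mathcal{B}_{\phi}$ every state $q$ from which no accepting run exists (i.e.\ whose \buchi\ language is empty); this is effective and does not change $L(\mathcal{B}_{\phi})$, since no accepting run of $\mathcal{B}_{\phi}$ ever visits such a state. Reinterpret the pruned automaton as an \NFA\ $\mathcal{N}$ over finite words by declaring every surviving state accepting. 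The claim is that $L(\mathcal{N})=\mathit{Good}$. If $\mathcal{N}$ accepts $x$, a run of $\mathcal{N}$ on $x$ ends in some surviving state $q$; picking any $z$ in the nonempty \buchi\ language of $q$ and concatenating this run with an accepting run from $q$ on $z$ yields an accepting run of $\mathcal{B}_{\phi}$ on $x\cdot z$, hence $x\cdot z\models\phi$ and $x\in\mathit{Good}$. Conversely, if $x\in\mathit{Good}$, fix $y$ with $x\cdot y\models\phi$ together with an accepting run of $\mathcal{B}_{\phi}$ on $x\cdot y$; every state visited by that run begins an accepting run and so survives the pruning, whence the length-$|x|$ prefix of the run is an accepting run of $\mathcal{N}$ on $x$.

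Finally, determinize $\mathcal{N}$ by the subset construction to a complete \DFA\ $D$ with $L(D)=\mathit{Good}$, and let $A_{\phi}$ be $D$ with its accepting and non-accepting states interchanged; then $L(A_{\phi})=(2^{\P})^*\setminus\mathit{Good}=pref(\phi)$, which is the statement. The size of $A_{\phi}$ is doubly exponential in $|\phi|$ in the worst case, which is immaterial for the existence claim but is exactly what motivates the symbolic construction of Section~\ref{sec:construction}. The one delicate point is the pruning step coupled with the identity $L(\mathcal{N})=\mathit{Good}$: without first discarding the empty-language states, the ``all states accepting'' \NFA\ would also accept finite words that lead only to dead ends and thus admit no good completion, strictly overapproximating $\mathit{Good}$. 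An alternative, purely syntactic proof proceeds by structural induction on the Until-free \NNF\ formula, assembling the bad-prefix \DFA\ from those of the subformulas (with the $R$ operator supplying the characteristic safety behaviour); this avoids the \buchi\ detour but is considerably more tedious in the Boolean and modal cases, so I would present the automata-theoretic argument above.
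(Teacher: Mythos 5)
The paper does not prove this statement; it is imported verbatim from~\cite{KV01}. Your argument is correct and is essentially the standard one from that reference --- recognize the complement language $\{x : \exists y,\ x\cdot y\models\phi\}$ by pruning the empty-language states of the \buchi automaton for $\phi$ and declaring all survivors accepting, then determinize and complement --- including the genuinely necessary pruning step and the (accurate) observation that regularity of $pref(\phi)$ needs no safety assumption, only the guarantee that bad prefixes capture all violations does.
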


Given a Safety \LTL formula $\phi$ and the corresponding DFA $A_{\phi}$, we can construct the DSA $A^s_{\phi}$. The correctness of such construction is guaranteed by the following theorem.
\begin{theorem}\label{thm:dsa}
For a Safety \LTL formula $\phi$, the DSA $A^s_{\phi} = (2^{\P}, S, s_0, \delta)$, which is dual to $A_{\phi} = (2^{\P}, S_d, s_0, \lambda, F_d)$, accepts exactly the traces that satisfy $\phi$.
\end{theorem}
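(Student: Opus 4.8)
The plan is to show the two language inclusions separately, reducing everything to the defining property of $A_{\phi}$ from Theorem~\ref{thm:safe} and the explicit dualization construction that precedes the statement. Let $\rho \in (2^{\P})^{\omega}$ be an arbitrary infinite trace. By definition of the dual DSA, $\rho$ is accepted by $A^s_{\phi}$ iff $A^s_{\phi}$ has an infinite run on $\rho$, i.e., iff the sequence $s_0, s_1, s_2, \ldots$ with $s_{i+1} = \delta(s_i, \rho_i)$ is well-defined at every step. Since $S = S_d \setminus F_d$ and $\delta(s,a)$ agrees with $\lambda(s,a)$ exactly when $\lambda(s,a) \notin F_d$ (and is undefined otherwise), the run of $A^s_{\phi}$ on $\rho$ is infinite iff the run $s_0, s_1', s_2', \ldots$ of the DFA $A_{\phi}$ on $\rho$ (which is total, since $\lambda$ is total) never enters $F_d$. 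In other words, $\rho$ is accepted by $A^s_{\phi}$ iff no finite prefix $\rho_0 \cdots \rho_{i-1}$ of $\rho$ is accepted by $A_{\phi}$.

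From here the two directions are short. For the forward direction, suppose $\rho$ is accepted by $A^s_{\phi}$; then no finite prefix of $\rho$ is accepted by $A_{\phi}$, so by Theorem~\ref{thm:safe} no finite prefix of $\rho$ is a bad prefix for $\phi$. By Definition~\ref{def:safe} (together with the fact that $\phi$, being a Safety \LTL formula, is safe), every word violating $\phi$ has a bad prefix; contrapositively, a word with no bad prefix cannot violate $\phi$, hence $\rho \models \phi$. For the converse, suppose $\rho \models \phi$. Then by the Bad/Good Prefix definition no finite prefix of $\rho$ can be a bad prefix for $\phi$ (a bad prefix $x$ would force $x \cdot y \not\models \phi$ for all $y$, in particular for the suffix of $\rho$, contradicting $\rho \models \phi$). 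Hence by Theorem~\ref{thm:safe} no finite prefix of $\rho$ is accepted by $A_{\phi}$, so the DFA run on $\rho$ avoids $F_d$ forever, the DSA run is infinite, and $\rho$ is accepted by $A^s_{\phi}$.

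The one subtlety I would be careful about — and what I expect to be the only real obstacle — is the step that invokes safety of $\phi$: I need that $\rho \not\models \phi$ implies $\rho$ has a bad prefix, and this is precisely Definition~\ref{def:safe} applied to the fact (from Sistla's theorem, specialized to the Safety \LTL fragment) that every Until-free \NNF formula is safe. So the proof must explicitly note that $\phi \in$ Safety \LTL is safe before using the bad-prefix characterization; without that, the equivalence "$\rho \models \phi$ iff $\rho$ has no bad prefix" only holds in one direction. Everything else is bookkeeping about partial versus total transition functions and the identification $S = S_d \setminus F_d$, which I would state cleanly as the single lemma "the DSA run on $\rho$ is infinite iff the DFA run on $\rho$ never visits $F_d$" and then chain the two biconditionals through Theorem~\ref{thm:safe} and Definition~\ref{def:safe}.
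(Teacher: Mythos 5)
Your proof is correct and follows essentially the same route as the paper's: both directions are reduced, via the identification $S = S_d \setminus F_d$, to the equivalence between the DSA run being infinite and the DFA run never entering $F_d$, and then chained through the bad-prefix characterization of $A_\phi$. The one point you flag as a subtlety---that the direction ``no bad prefix implies $\rho \models \phi$'' requires invoking the safety of $\phi$ (Definition~\ref{def:safe} via Sistla's theorem)---is in fact used implicitly but not stated in the paper's own proof, so your version is, if anything, slightly more careful.
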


\begin{proof}
For an infinite trace $\rho$, $\rho \models \phi$ implies that an arbitrary prefix $\rho'$ of $\rho$ is not a bad prefix for $\phi$, so $\rho'$ cannot be accepted by $A_{\phi}$. Therefore, starting from the initial state $s_0$, $\lambda$ always returns some successor $s' \notin F_d$, so the corresponding transition is also in $A^s_{\phi}$. The run $r$ of $\rho$ on $A^s_{\phi}$ is indeed infinite. As a result, $\rho \models \phi$ implies that $\rho$ can be accepted by $A^s_{\phi}$.

On the other hand, an infinite trace $\rho$ being accepted by $A^s_{\phi}$ implies that the run $r$ of $\rho$ on $A^s_{\phi}$ is infinite. Therefore, starting from the initial state $s_0$, partial function $\delta$ can always return some successor $s' \in S$, for which $s' \notin F_d$. There is a corresponding transition in $A_{\phi}$ for each transition in $A^s_{\phi}$, then an arbitrary prefix $\rho'$ of $\rho$ is indeed can not be accepted by $A_{\phi}$, such that $\rho'$ is not a bad prefix. As a result, $\rho$ can be accepted by $A^s_{\phi}$ implies that $\rho \models \phi$.
\end{proof}

Based on Theorem~\ref{thm:dsa}, the construction of the DSA relies on the construction of the DFA for the Safety formula $\phi$. Therefore, we can leverage the techniques and tools developed for DFA construction. Although it still cannot avoid the doubly-exponential complexity, DFA construction is much simpler than that of $\omega$-automata (e.g. parity~\cite{Safra88}, or co-B\"uchi~\cite{KupfermanV05}). 
Consider a Safety \LTL formula $\phi$. From Theorem~\ref{thm:equiv} and~\ref{thm:safe}, we know that $\neg \phi$, which is co-safe, can be interpreted over finite words. Thus, we can construct the DFA $A_{\phi}$ from $\neg \phi$.
~\\
\noindent\textbf{DFA construction}
Summarily, the DFA construction is processed as follows: Given a Safety \LTL formula $\phi$, we first negate it to obtain a Co-Safety \LTL formula $\neg\phi$. Taking the translation described below, which restricts the interpretation of $\neg \phi$ over \emph{finite linear ordered traces}, we can obtain a first-order logic formula $fol()$. The DFA for such $fol()$ is obviously able to accept exactly the set of bad prefixes for $\phi$ (or say, good prefixes for $\neg\phi$).

Consider an infinite trace $\sigma = \rho_0\rho_1\dotsb\rho_n\tt\tt\dotsb$ that satisfies the Co-Safety \LTL formula $\psi = \neg \phi$ in \NNF, where the finite prefix $\rho = \rho_0\rho_1\dotsb\rho_n$ of $\sigma$ is a good prefix for $\psi$. The corresponding FOL interpretation $\mathcal{I} = (\Delta^I, \cdot^{\mathcal{I}})$ of $\rho$ is defined as follows: $\Delta^I = \{0, 1, 2,\dotsb, last\}$, where $last = |\rho| - 1$. For each $p \in \mathcal{P}$, its interpretation $p^{\mathcal{I}} = \{i~|~p \in \rho(i)\}$. Intuitively, $p^{\mathcal{I}}$ is interpreted as the set of positions where $p$ is true in $\rho$.
Then we can generate a corresponding FOL formula that opens in $x$ by a function $fol(\psi, x)$ from the Co-Safety \LTL formula and a variable $x$ where $0 \leq x \leq last$, which is defined as follows:
\begin{itemize}
\item $fol(p, x) = p(x)$ and $fol(\neg p, x) = \neg p(x)$
\item $fol(\psi_1 \wedge \psi_2, x) = fol(\psi_1, x) \wedge fol(\psi_2, x)$
\item $fol(\psi_1 \vee \psi_2, x) = fol(\psi_1, x) \vee fol(\psi_2, x)$
\item $fol(X \psi, x) = \exists y.succ(x,y) \wedge fol(\psi, y) $
%\item $fol(F \phi, x) = \exists y.x\leq y \leq last\wedge fol(\phi, y) $
\item $fol(\psi_1 U \psi_2, x) = \exists y.x\leq y\leq last \wedge fol(\psi_2, y) \wedge \forall z.x\leq z < y \rightarrow fol(\psi_1,z)$
\end{itemize}

In the above, the notation \emph{succ} denotes that $y$ is the successor of $x$. The following theorem guarantees a finite trace $\rho$ is a good prefix of the Co-Safety \LTL formula $\psi$ iff the corresponding interpretation $\mathcal{I}$ of $\rho$ models $fol(\psi, 0)$.

\begin{theorem}\label{thm:fol}
Given a Co-Safety \LTL formula $\psi$, a finite trace $\rho$ and the corresponding interpretation $\mathcal{I}$ of $\rho$, $\rho$ is a good prefix for $\psi$ iff $\mathcal{I}\models fol(\psi, 0)$.
\end{theorem}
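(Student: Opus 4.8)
The plan is to prove the theorem by structural induction on the Co-Safety \LTL formula $\psi$, but with a statement generalized to an arbitrary position $x$ of $\rho$ rather than only $x=0$. Concretely, I would show: for every Release-free \NNF formula $\psi$, every finite trace $\rho$ with interpretation $\mathcal{I}$, and every $x$ with $0\leq x\leq last$, we have $\mathcal{I}\models fol(\psi,x)$ if and only if the suffix $\rho_x\rho_{x+1}\dotsb\rho_{last}$ is a good prefix for $\psi$ — equivalently, $(\rho\cdot\tau),x\models\psi$ for every infinite trace $\tau$. Taking $x=0$ then yields the theorem. The base cases ($\psi$ a literal $p$ or $\neg p$) are immediate: $fol(p,x)=p(x)$ holds in $\mathcal{I}$ exactly when $p\in\rho_x$, and since $x\leq last$ this is exactly the condition that every extension of $\rho$ satisfies $p$ at position $x$. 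The Boolean cases $\wedge,\vee$ and the temporal cases $X$ and $U$ are meant to reduce to the induction hypothesis applied to subformulas at the appropriate positions; note that $fol(\psi_1 U\psi_2,x)$ is precisely the finite-trace reading of Until (the eventuality must be discharged within $\{x,\dotsc,last\}$) and $fol(X\psi',x)$ is unsatisfiable when $x=last$.

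The direction ``$\mathcal{I}\models fol(\psi,x)$ implies good prefix from $x$'' is the routine one. An inductive argument shows that if the \FOL formula witnesses $\psi$ using only positions lying inside $\rho$ — a witness position $y\leq last$ for an Until, a successor $y=x+1\leq last$ for a Next — then those same witnesses certify $(\rho\cdot\tau),x\models\psi$ for an arbitrary infinite extension $\tau$, because the relevant positions and their valuations are unchanged when $\tau$ is appended. The only place the Release-free hypothesis is used is that the Release case never arises; this is exactly the syntactic restriction that makes $\psi$ co-safe, and it is essential (for $G q$, which is $\ff R q$, the finite trace $\{q\}$ would ``satisfy'' the naive finite reading but is not a good prefix).

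The converse — good prefix from $x$ implies $\mathcal{I}\models fol(\psi,x)$ — is where the real work lies, and I expect it to be the main obstacle. The natural attempt is the contrapositive: from $\mathcal{I}\not\models fol(\psi,x)$ build an infinite extension $\tau$ with $(\rho\cdot\tau),x\not\models\psi$. This goes through smoothly for literals and conjunctions, but it breaks for disjunction and for a Next obligation pointing past $last$: a good prefix for $\psi_1\vee\psi_2$ need not be a good prefix for either disjunct, since distinct extensions may be needed to falsify $\psi_1$ and $\psi_2$ separately, and an unfulfilled $X$-obligation at position $last$ refers to a position outside $\rho$ whose valuation is still under the environment's control. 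To overcome this I would strengthen the induction hypothesis so that it tracks the set of subformulas forced to hold in every extension — in effect the subformula-closure state of the determinized automaton underlying Theorem~\ref{thm:safe} — and argue that $fol(\psi,0)$ defines the same language of finite words as the bad-prefix \DFA $A_\phi$ for $\phi=\neg\psi$, by exhibiting a correspondence between runs of that \DFA and the evaluations of the \FOL formula on prefixes of increasing length. An alternative that suffices for the synthesis construction consuming this theorem is to restrict attention to satisfiable $\phi$, for which the degenerate good prefixes responsible for the mismatch do not occur; I would make precise that this restriction is harmless because it does not change the language of the dual \DSA $A^s_\phi$.
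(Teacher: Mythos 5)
Your overall strategy --- structural induction on $\psi$, generalized to an arbitrary position $x$ --- is the same as the paper's, and your treatment of the direction ``$\mathcal{I}\models fol(\psi,x)$ implies good prefix'' matches the paper's argument. The important difference is that you have correctly identified that the converse direction does not follow from the naive case analysis, whereas the paper's proof simply asserts the problematic equivalences: it claims that $\rho$ is a good prefix for $\psi_1\vee\psi_2$ iff it is a good prefix for one of the disjuncts, and in the $X$ and $U$ cases it tacitly assumes the witnessing positions lie inside $\rho$ and argues essentially only one direction. Your worry is not a technicality: taking $\psi = Xp\vee X\neg p$ and $\rho$ of length $1$, every infinite extension of $\rho$ satisfies $\psi$, so $\rho$ is a good prefix, yet $fol(Xp,0)$ and $fol(X\neg p,0)$ are both false in $\mathcal{I}$ because $0$ has no successor in the one-element domain; hence $\mathcal{I}\not\models fol(\psi,0)$ and the stated ``iff'' fails. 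So the obstruction you flag is a defect of the theorem as stated, not merely of your attempt.

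Of your two proposed repairs, the second is the one that matches what the rest of the paper actually needs. For Theorem~\ref{thm:dsa} to go through it suffices that (i) every finite trace whose interpretation models $fol(\neg\phi,0)$ is a bad prefix for $\phi$ (the sound direction, which both you and the paper establish), and (ii) every infinite trace violating $\phi$ has \emph{some} finite prefix whose interpretation models $fol(\neg\phi,0)$ --- not that \emph{every} good prefix is recognized. Point (ii) does hold: if $\sigma\models\neg\phi$ then, $\neg\phi$ being co-safe, $\sigma$ has a good prefix, and a sufficiently long further prefix of $\sigma$ supplies witnesses inside the domain for all outstanding $X$- and $U$-obligations; this is provable by a clean induction. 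I would recommend stating and proving the lemma in that weakened form rather than attempting your strengthened induction hypothesis that tracks forced subformulas, which can be made to work but essentially re-derives the determinization behind Theorem~\ref{thm:safe}. As it stands your proposal is a plan rather than a proof --- neither repair is carried out --- but the gap you isolate is genuine and is precisely the point the paper's own induction passes over.
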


\begin{proof}
We prove the theorem by the induction over the structure of $\psi$.
\begin{itemize}
\item Basically, if $\psi = p$ is an atom, $\rho$ is a good prefix for $\psi$ iff $p\in\rho_0$. By the definition of $\mathcal{I}$, we have that $0\in p^{\mathcal{I}}$. As a result, $\rho$ is a good prefix for $\psi$ iff $\mathcal{I}\models fol(p, 0)$.  Moreover, if $\psi = \neg p$ where $p$ is an atom, $\rho$ is a good prefix for $\psi$ iff $p\not\in\rho_0$, and iff $0\not\in p^{\mathcal{I}}$, finally iff $\mathcal{I}\models fol(\neg p, 0)$ holds;

\item If $\psi =\psi_1 \wedge \psi_2$, $\rho$ is a good prefix for $\psi$ implies $\rho$ is a good prefix for both $\psi_1$ and $\psi_2$. By induction hypothesis, it is true that $\mathcal{I}\models fol (\psi_1,0)$ and $\mathcal{I}\models fol(\psi_2, 0)$. So $\mathcal{I}\models fol(\psi_1,0)\wedge fol(\psi_2, 0)$, i.e. $\mathcal{I}\models fol (\psi_1 \wedge \psi_2, 0)$ holds. On the other hand, since $\mathcal{I}\models fol (\psi_1 \wedge \psi_2, 0)$, $\mathcal{I}\models fol (\psi_1,0)$ and $\mathcal{I}\models fol (\psi_2, 0)$ are true. By induction hypothesis, we have that $\rho$ is a good prefix for both $\psi_1$ and $\psi_2$. Thus $\rho$ is a good prefix for $\psi_1 \wedge \psi_2$;

\item If $\psi = \psi_1 \vee \psi_2$, $\rho$ is a good prefix for $\psi$ iff $\rho$ is a good prefix for either $\psi_1$ or $\psi_2$. Without loss of generality, we assume that $\rho$ is a good prefix for $\psi_1$. By induction hypothesis, $\mathcal{I}\models fol (\psi_1,0)$ holds, thus $\mathcal{I}\models fol (\psi_1\vee\psi_2,0)$ is true. The other direction can be proved analogously. 

%\item If $\psi = \psi_1 \vee \psi_2$, the proof here is omitted to save space.

\item If $\psi= X\psi_1$, $\rho$ is a good prefix for $\psi$ iff suffix $\rho'=\rho_1\rho_2\ldots,\rho_{|\rho|-1}$ of $\rho$ is a good prefix for $\psi_1$. Let $\mathcal{I}'$ be the corresponding interpretation of $\rho'$, thus every atom $p \in \P$ satisfies $i\in p^{\mathcal{I}'}$ iff $(i+1)\in p^{\mathcal{I}}$. By induction hypothesis, $\mathcal{I}'\models fol (\psi_1, 0)$ holds, thus $\mathcal{I}\models fol (\psi_1, 1)$ is true. Therefore, $\mathcal{I}\models fol (X\psi_1, 0)$ holds. 

% \item If $\phi= X\phi_1$, $\rho$ is a good prefix for $\phi$ iff suffix $\rho'=\rho_1\rho_2\ldots,\rho_{|\rho|-1}$ of $\rho$ is a good prefix of $\phi_1$. Let $\mathcal{I}'$ be the corresponding interpretation of $\rho'$. By induction hypothesis, $\mathcal{I}'\models fol (\phi_1, 0)$ holds. Every atom $p \in \P$ satisfies that $p^{\mathcal{I}} = \{i+1 | i\in p^{\mathcal{I}'}\}$, thus $\mathcal{I}\models fol (\phi_1, 1)$ is true. Therefore, $\mathcal{I}\models fol (X\phi_1, 0)$ holds.  

%\item $\pi,i\models F\phi$ holds iff $\exists j, i \leq j \leq n, \pi, j\models \phi$ iff $
%\exists j, i \leq j \leq n,\mathcal{I},[y/j]\models fol(\phi,y)$ iff $\mathcal{I},[x/i]\models \exists y, x \leq y \leq last \wedge fol(\phi,y)$ since $last = n$ iff $\mathcal{I},[x/i]\models fol(F\phi,x)$ is true.

\item If $\psi = \psi_1 U \psi_2$, $\rho$ is a good prefix for $\psi$ iff there exists $i~(0\leq i \leq |\rho|-1)$ such that suffix $\rho' = \rho_i\rho_{i+1}\ldots,\rho_{|\rho|-1}$ of $\rho$ is a good prefix for $\psi_2$. And for all $j~(0\leq j< i)$, $\rho'' = \rho_j\rho_{j+1}\ldots,\rho_{i-1}$ is a good prefix for $\psi_1$. Let $\mathcal{I}'$ and $\mathcal{I}''$ be the corresponding interpretations of $\rho'$ and $\rho''$. Thus every atom $p \in \P$ satisfies that 
$k\in p^{\mathcal{I}'}$ iff $(i+k) \in p^{\mathcal{I}}$, $k\in p^{\mathcal{I}''}$ iff $(j+k) \in p^{\mathcal{I}}$. By induction hypothesis, $\mathcal{I}'\models fol (\psi_2, 0)$ and $\mathcal{I}''\models fol (\psi_1, 0)$ holds. Thus $\mathcal{I} \models \exists i. 0\leq i\leq (|\rho|-1)\cdot fol(\psi_2,i)$ and $\mathcal{I} \models \forall j.0\leq j <i\cdot fol(\psi_1,j)$ hold. Therefore, $\mathcal{I} \models fol(\psi_1 U \psi_2,0)$.

% \item If $\phi = \phi_1 U \phi_2$ and let $\rho_i$ be the suffix of $\rho$ starting from $i (0\leq i<|\rho|)$, then $\rho$ is a good prefix of $\phi$ iff there is $0\leq i< |\rho|$ such that $\rho_i$ is the good prefix of $\phi_2$ and $\rho_j$ is the good prefix of $\phi_1$ for $0\leq j< i$. According to the induction hypothesis, we have that true iff $\exists 0\leq i <|\rho|\cdot \mathcal{I}_i \models fol(\phi_2,i)$ and $\forall 0\leq j <i\cdot\mathcal{I}_j \models fol(\phi_1,j)$, where $\mathcal{I}_i$ and $\mathcal{I}_j$ are the corresponding interpretations of $\rho_i$ and $\rho_j$. The proof is done.

\end{itemize}
\end{proof}

MONA~\cite{KlaEtAl:Mona} is a tool that translates \emph{Weak Second-order Theory of One or Two successors} (WS1S/WS2S)~\cite{Doner70} formula to minimal DFA, represented symbolically. WS1S subsumes the \emph{First-Order Logic} (FOL) over finite traces, which allows us to adopt MONA to construct the DFA $A_{\phi}$ for Safety formula $\phi$. Taking the assumption that the DFA generated by MONA accepts exactly the same traces that satisfy $fol(\neg\phi, 0)$, which corresponds to Co-Safety \LTL formula $\neg \phi$, by Theorem~\ref{thm:fol} we can conclude that the DFA returned by MONA is $A_{\phi}$ that accepts exactly the bad prefixes for the Safety \LTL formula $\phi$. 

\begin{theorem}
Let $\phi$ be a Safety \LTL formula and $A_{\phi}$ be the DFA constructed by MONA taking $fol(\neg\phi, 0)$ as input. Finite trace $\rho$ is a bad prefix for $\phi$ iff $\rho$ is accepted by $A_{\phi}$.
\end{theorem}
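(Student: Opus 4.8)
The plan is to obtain this statement as an immediate corollary of the chain of results already established, so the proof is essentially a matter of assembling them in the right order. The key observation is that $\phi$ is a Safety \LTL formula, hence by the Sistla-type theorem it is \emph{safe}, and its negation, once pushed into \NNF, is Release-free and therefore a Co-Safety \LTL formula, hence \emph{co-safe}. Write $\psi$ for the \NNF of $\neg\phi$. I would first record these two facts explicitly, since the translation $fol(\cdot,\cdot)$ and Theorem~\ref{thm:fol} are only defined for Co-Safety formulas in \NNF, so it matters that $\psi$ genuinely lies in that fragment and that $fol(\neg\phi,0)$ is read as $fol(\psi,0)$.

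Next I would chain the equivalences. By Theorem~\ref{thm:equiv}, since $\phi$ is safe, the set of bad prefixes for $\phi$ coincides with the set of good prefixes for $\psi\equiv\neg\phi$; so $\rho$ is a bad prefix for $\phi$ iff $\rho$ is a good prefix for $\psi$. By Theorem~\ref{thm:fol}, applied to the Co-Safety formula $\psi$, the finite trace $\rho$ is a good prefix for $\psi$ iff its associated first-order interpretation $\mathcal{I}=\mathcal{I}_\rho$ (the structure with domain $\{0,\dots,|\rho|-1\}$ interpreting each $p$ as $\{i\mid p\in\rho_i\}$) satisfies $fol(\psi,0)$. Finally, invoking the stated correctness assumption for MONA — that the DFA it produces from a WS1S formula (here $fol(\neg\phi,0)$, which lies in the FOL-over-finite-traces fragment subsumed by WS1S) accepts exactly the finite words whose induced interpretation models the formula — we get that $\mathcal{I}_\rho\models fol(\neg\phi,0)$ iff $\rho$ is accepted by $A_\phi$. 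Composing the three equivalences yields the claim.

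The only real care-points, rather than obstacles, are bookkeeping ones: (i) making sure the \NNF normalization of $\neg\phi$ is harmless, i.e.\ that $fol(\neg\phi,0)$ in the theorem statement is by definition $fol(\psi,0)$ and that $\psi$ is equivalent to $\neg\phi$ over all (finite and infinite) traces, so the notions of good prefix agree; (ii) reconciling the word model used by MONA with the interpretation $\mathcal{I}_\rho$ of Theorem~\ref{thm:fol}, which is the content of the ``assumption'' on MONA's output that the text explicitly makes, so it can simply be cited; and (iii) noting that Theorem~\ref{thm:fol} characterizes exactly the good prefixes (a finite trace, not an infinite extension), which is precisely what $A_\phi$ is meant to recognize. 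I do not expect any step to be genuinely hard: the substance was already discharged in Theorems~\ref{thm:equiv} and~\ref{thm:fol}, and this final theorem is the corollary that packages them together with the tool-correctness hypothesis into the statement actually used by the synthesis procedure.
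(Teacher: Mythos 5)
Your proposal is correct and matches the paper's own justification: the paper gives no separate proof for this theorem, but the paragraph immediately preceding it derives it in exactly the way you describe, by chaining Theorem~\ref{thm:equiv} (bad prefixes of $\phi$ are the good prefixes of $\neg\phi$), Theorem~\ref{thm:fol} (good prefix iff $\mathcal{I}\models fol(\neg\phi,0)$), and the stated correctness assumption on MONA. Your extra bookkeeping about the \NNF normalization of $\neg\phi$ is a reasonable (and slightly more careful) addition, not a departure.
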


Deleting all transitions toward the accepting states in $A_{\phi}$ and removing the accepting states of $A_{\phi}$ derives the safety automaton $A^s_{\phi}$. To solve the Safety \LTL synthesis problem, we reduce the problem to a deterministic safety game over this automaton. We first present the standard formulation and algorithm for solving such a game. Then, since MONA constructs the DFA symbolically, we present a symbolic version of this algorithm.

\subsection{Solving Safety Games Symbolically}
%Following are the questions I found relevant to this part in the comments of the ijcai paper.
%1. An outline of the symbolic approach
%2. A more explicit formal result to claim that the symbolic approach correctly reconstruct the original synthesis problem
%3. A proof of soundness of the symbolic structure
%4. Is the complexity of solving the problem symbolically the same as the original problem?

Computing a \emph{winning strategy} of the safety game over DSA solves the synthesis problem. We base our symbolic approach on the algorithm from~\cite{DegVa15} for DFA (reachability) games, which are the duals of safety games. In this section, we first describe the general algorithm, which computes the set of winning states as a fixpoint. We then show how to perform this computation symbolically using the symbolic representation of the state graph constructed by MONA. Finally, we describe how we can use boolean synthesis to extract a winning strategy from the symbolic representation of the set of winning states.

Consider a set of states $\mathcal{E}$. The \emph{pre-image} of $\mathcal{E}$ is a set $Pre(\mathcal{E}) = \{s \in S\ |\ \forall X \in 2^{\X}. \exists Y \in 2^{\Y}. \delta(s,(X,Y)) \in \mathcal{E}\}$. That is, $Pre(\E)$ is the set of states from which, regardless of the action of the environment, the controller can force the game into a state in $\E$. If the controller moves first, we swap the order of $\exists Y \in 2^{\Y}$ and $\forall X \in 2^{\X}$ to compute the pre-image.

We define $Win(A^s)$ as the greatest-fixpoint of $Win_i(A^s)$, which denotes the set of states in which the controller can remain within $i$ steps. This means that $Win(A^s)$ is the set of states in which the controller can remain indefinitely, that is, the set of winning states. The safety game is solved by computing the fixpoint as follows: 
%(1)~$Win_0(A^s) = S \label{eq:win0}$,
%(2)~$Win_{i+1}(A^s) = Win_i(A^s) \cap Pre(Win_i(A^s)) \label{eq:winrec}$.
\begin{align}
&Win_0(A^s) = S \label{eq:win0} \\
&Win_{i+1}(A^s) = Win_i(A^s) \cap Pre(Win_i(A^s)) \label{eq:winrec}
\end{align}
That is, we start with the set of all states and at each iteration remove those states from which the controller cannot force the game to remain in the current set.

For realizability checking, if $s_0 \in Win(A^s)$, then the game is realizable, otherwise the game is unrealizable. We also consider an \emph{early-termination} heuristic to speed up the realizability checking: after each computation of $Win_i(A^s)$, if $s_0 \notin Win_i(A^s)$, then return unrealizable. To generate the strategy, we define a deterministic finite transducer $\mathcal{T} = (2^{\mathcal{X}}, 2^{\mathcal{Y}}, Q, s_0, \varrho, \omega)$ based on the set $Win(A^s)$, where: $Q = Win(A^s)$ is the set of winning states;  $\varrho : Q\times 2^{\X}\rightarrow Q$ is the transition function such that $\varrho (q, X) = \delta (q, X\cup Y)$ and $Y =\omega(q, X)$; $\omega: Q \times 2^{\mathcal{X}}\rightarrow 2^{\Y}$ is the output function, where $\omega (q, X) = Y$ such that $\delta(q, X \cup Y) \in Q$.
Note that there are many possible choices for the output function $\omega$. The transducer $\mathcal{T}$ defines a winning strategy by restricting $\omega$ to return only one possible setting of $\Y$.

Following the construction in Section~\ref{sec:construction}, MONA produces a symbolic representation of the DFA $A_\phi$ which accepts all bad prefixes of the Safety \LTL formula $\phi$. Therefore, in this section we show how to derive a DSA and solve the corresponding safety game from this representation. Following~\cite{ZTLPV17}, we define a symbolic DFA as $\A = (\mathcal{X}, \mathcal{Y}, \mathcal{Z}, Z_0, \eta, f)$, where: $\X$ is a set of input variables; $\Y$ is a set of output variables; $\Z$ is a set of state variables; $Z_0 \in 2^\mathcal{Z}$ is the assignment to the state propositions corresponding to the initial state; $\eta : 2^\mathcal{X} \times 2^\mathcal{Y} \times 2^\mathcal{Z} \rightarrow 2^\mathcal{Z}$ is a boolean function mapping assignments $X$, $Y$ and $Z$ of the variables of $\mathcal{X}$, $\mathcal{Y}$ and $\mathcal{Z}$ to a new assignment $Z'$ of the variables of $\mathcal{Z}$; $f$ is a boolean formula over the propositions in $\mathcal{Z}$, such that $f$ is satisfied by an interpretation $Z$ iff $Z$ corresponds to an accepting state.

% \begin{itemize}
% \item $\X$ is a set of input variables;

% \item $\Y$ is a set of output variables;

% \item $\Z$ is a set of state variables;

% \item $Z_0 \in 2^\mathcal{Z}$ is the assignment to the state propositions corresponding to the initial state;

% \item $\eta : 2^\mathcal{X} \times 2^\mathcal{Y} \times 2^\mathcal{Z} \rightarrow 2^\mathcal{Z}$ is a boolean function mapping assignments $X$, $Y$ and $Z$ of the variables of $\mathcal{X}$, $\mathcal{Y}$ and $\mathcal{Z}$ to a new assignment $Z'$ of the variables of $\mathcal{Z}$;

% \item $f$ is a boolean formula over the propositions in $\mathcal{Z}$, such that $f$ is satisfied by an interpretation $Z$ iff $Z$ corresponds to an accepting state.
% \end{itemize}

% $\X$ is a set of input variables; $\Y$ is a set of output variables; $\Z$ is a set of state variables; $Z_0 \in 2^\mathcal{Z}$ is the assignment to the state propositions corresponding to the initial state; $\eta : 2^\mathcal{X} \times 2^\mathcal{Y} \times 2^\mathcal{Z} \rightarrow 2^\mathcal{Z}$ is a boolean function mapping assignments $X$, $Y$ and $Z$ of the variables of $\mathcal{X}$, $\mathcal{Y}$ and $\mathcal{Z}$ to a new assignment $Z'$ of the variables of $\mathcal{Z}$; $f$ is a boolean formula over the propositions in $\mathcal{Z}$, such that $f$ is satisfied by an interpretation $Z$ iff $Z$ corresponds to an accepting state.

Given $\A$, the corresponding safety automaton $A^s = (2^{\mathcal{X} \cup \mathcal{Y}}, S, s_0, \delta)$ that avoids all bad prefixes accepted by $\A$ is defined by: $\X$ and $\Y$ are the same as in the definition of $\A$; $S = \{Z \in 2^\Z \mid Z \not\models f\}$; $s_0 = Z_0$; $\delta : S \times 2^{\X \cup \Y} \rightarrow S$ is the partial function such that $\delta(Z, X \cup Y) = \eta(X, Y, Z)$ if $Z \in S$, and is undefined otherwise.

% \begin{itemize}
% \item $\X$ and $\Y$ are the same as in the definition of $\A$;

% \item $S = \{Z \in 2^\Z \mid Z \not\models f\}$;

% \item $s_0 = Z_0$;

% \item $\delta : S \times 2^{\X \cup \Y} \rightarrow S$ is the partial function such that $\delta(Z, X \cup Y) = \eta(X, Y, Z)$ if $Z \in S$, and is undefined otherwise.
% \end{itemize}

\begin{lemma}
If $\A_\phi$ is a symbolic DFA that accepts exactly the bad prefixes of a Safety \LTL formula $\phi$, then $A^s_\phi$ is a deterministic safety automaton for $\phi$.
\end{lemma}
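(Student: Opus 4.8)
The plan is to reduce this to the explicit‑automaton statement already proved in Theorem~\ref{thm:dsa}, by first unwinding the symbolic DFA into an ordinary DFA and then observing that the symbolic dualization is literally the explicit dualization of that DFA. So the first step is to make explicit the finite DFA denoted by $\A_\phi = (\X,\Y,\Z,Z_0,\eta,f)$: set $A_\phi = (2^{\X\cup\Y},\, 2^\Z,\, Z_0,\, \lambda,\, F_d)$ with $\lambda(Z, X\cup Y) := \eta(X,Y,Z)$ and $F_d := \{Z \in 2^\Z \mid Z \models f\}$. Because $\eta$ is a total boolean function, $\lambda$ is a total, deterministic transition function, so $A_\phi$ is a genuine DFA; and by the defining property of $f$ (an interpretation $Z$ accepts iff $Z \models f$) together with the hypothesis that $\A_\phi$ accepts exactly the bad prefixes of $\phi$, the DFA $A_\phi$ accepts exactly the bad prefixes of $\phi$. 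In other words, $A_\phi$ is (an instance of) the DFA guaranteed by Theorem~\ref{thm:safe}.

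The second step is to check that the symbolic construction of $A^s_\phi$ from $\A_\phi$ given above coincides with the explicit dualization of $A_\phi$ described just before Theorem~\ref{thm:safe}. Indeed, the state set is $S = \{Z \in 2^\Z \mid Z \not\models f\} = 2^\Z \setminus F_d$, the initial state is $Z_0 = s_0$, and $\delta(Z, X\cup Y)$ equals $\eta(X,Y,Z) = \lambda(Z, X\cup Y)$ exactly in the case that $Z$ lies in $S$; recalling that $\delta$ is typed with codomain $S$, this says that a transition survives in $A^s_\phi$ precisely when both its source and its $\lambda$‑successor are non‑accepting states of $A_\phi$, and is dropped otherwise — which is exactly the dualization rule ``$\delta(s,a) = \lambda(s,a)$ if $\lambda(s,a) \in S$, undefined otherwise.'' Reconciling the two phrasings here — the construction's source‑side condition ``if $Z \in S$'' versus the dualization's successor‑side condition ``if $\lambda(s,a) \in S$'' — is the one point that needs to be spelled out carefully; once that is done, $A^s_\phi$ is seen to be the dual DSA of $A_\phi$ in the precise sense required by Theorem~\ref{thm:dsa}.

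With those two steps in place the conclusion is immediate: apply Theorem~\ref{thm:dsa} with $A_\phi$ as the DFA (accepting exactly the bad prefixes of the Safety \LTL formula $\phi$) and $A^s_\phi$ as its dual. The theorem then yields that $A^s_\phi$ accepts exactly the infinite traces $\rho$ with $\rho \models \phi$, which is precisely what it means for $A^s_\phi$ to be a deterministic safety automaton for $\phi$. I expect the only genuine work to be the identification of the two dualizations in the second paragraph; the rest is unwinding the definitions of symbolic DFA and of $A^s_\phi$.
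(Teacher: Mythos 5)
Your proof is correct, and it is exactly the argument the paper intends: the lemma is stated there without proof, as an immediate consequence of the fact that the symbolic construction of $A^s_\phi$ from $\A_\phi$ is just the explicit DFA-to-DSA dualization (given before Theorem~\ref{thm:safe}) applied to the DFA that $\A_\phi$ denotes, so that Theorem~\ref{thm:dsa} applies. Your careful reconciliation of the source-side condition ``if $Z \in S$'' with the successor-side condition ``if $\lambda(s,a)\in S$'' via the codomain of $\delta$ is the right reading of the paper's (slightly loose) symbolic definition, and the rest is definition-unwinding as you say.
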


This correspondence allows us to use the symbolic representation of $\A$ to compute the solution of the safety game defined by $A^s$. To compute the set of winning states, we represent the set $Win_i(A^s)$ by a boolean formula $w_i$ in terms of the state variables $\Z$, such that an assignment $Z \in 2^\Z$ satisfies $w_i$ if and only if the state represented by $Z$ is in $Win_i(A^s)$. We define $w_0 = \neg f$ and $w_{i+1}(Z) = w_i(Z) \land \forall X . \exists Y . w_i(\eta(X, Y, Z))$, which correspond respectively to (\ref{eq:win0}) and (\ref{eq:winrec}) above. The fixpoint computation terminates once $w_{i+1} \equiv w_i$, at which point we define $w = w_i$, representing $Win(A^s)$. We can then test for realizability by checking if the assignment $Z_0$ representing the initial state satisfies $w$.

\begin{theorem}
The safety game defined by $A^s$ is realizable if and only if $Z_0 \models w$.
\end{theorem}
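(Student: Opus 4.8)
The plan is to show that the symbolic fixpoint $w$ represents exactly the set $Win(A^s)$ of winning states of the safety game over $A^s$, and then to invoke the standard characterization (stated above) that the game is realizable iff its initial state $s_0 = Z_0$ lies in $Win(A^s)$.

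First I would prove, by induction on $i$, that for every assignment $Z \in 2^\Z$ we have $Z \models w_i$ iff $Z \in Win_i(A^s)$. The base case is immediate: $w_0 = \neg f$ and $Win_0(A^s) = S = \{Z \in 2^\Z \mid Z \not\models f\}$. Before the inductive step I would record two monotonicity facts, each by a one-line induction: $Win_{i+1}(A^s) \subseteq Win_i(A^s) \subseteq S$, and $w_{i+1}$ semantically implies $w_i$, which in turn implies $w_0 = \neg f$. For the step, unfolding the definition gives that $Z \models w_{i+1}$ iff $Z \models w_i$ and $\forall X \in 2^\X.\, \exists Y \in 2^\Y.\, \eta(X,Y,Z) \models w_i$; by the induction hypothesis this is equivalent to $Z \in Win_i(A^s)$ and $\forall X.\, \exists Y.\, \eta(X,Y,Z) \in Win_i(A^s)$.

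The one point needing care — and the main obstacle — is reconciling this with the partiality of $\delta$. I would argue: if $Z \in Win_i(A^s)$ then $Z \in S$, so $\delta(Z, X \cup Y)$ is defined and equals $\eta(X,Y,Z)$ precisely when $\eta(X,Y,Z) \in S$; and since $Win_i(A^s) \subseteq S$, the condition $\eta(X,Y,Z) \in Win_i(A^s)$ already forces $\eta(X,Y,Z) \in S$, hence forces $\delta(Z, X\cup Y)$ to be defined and equal to $\eta(X,Y,Z)$. Conversely, if $\delta(Z,X\cup Y)$ is defined and in $Win_i(A^s)$ then $\eta(X,Y,Z) = \delta(Z,X\cup Y) \in Win_i(A^s)$. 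Thus, for $Z \in Win_i(A^s)$, the clause $\forall X.\,\exists Y.\,\eta(X,Y,Z) \in Win_i(A^s)$ is exactly the condition $Z \in Pre(Win_i(A^s))$ (recall $Pre(\cdot) \subseteq S$, and membership of $\delta(Z,X\cup Y)$ in $Win_i$ already subsumes definedness). Symbolically the same effect is obtained for free, since $\eta(X,Y,Z) \models f$ would contradict $\eta(X,Y,Z) \models w_i$ via $w_i \Rightarrow \neg f$. Combining, $Z \models w_{i+1}$ iff $Z \in Win_i(A^s) \cap Pre(Win_i(A^s)) = Win_{i+1}(A^s)$, which completes the induction; the case $Z \not\models w_i$ is handled uniformly by the first conjunct on both sides.

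Finally I would dispatch termination and conclude. Since $S$ is finite and the sets $Win_i(A^s)$ are decreasing, there is an $N$ with $Win_{N+1}(A^s) = Win_N(A^s) = Win(A^s)$; by the correspondence just established the formulas $w_i$ represent these same sets, so they stabilize semantically at the same index, and $w$ — defined at the first stage where $w_{i+1} \equiv w_i$ — satisfies $\{Z \mid Z \models w\} = Win(A^s)$. Hence $Z_0 \models w$ iff $Z_0 = s_0 \in Win(A^s)$, which by the standard theory of safety games (witnessed concretely by the transducer $\mathcal{T}$ built from $Win(A^s)$ in the case $s_0 \in Win(A^s)$) holds iff the safety game over $A^s$ is realizable. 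I expect every part except the partiality bookkeeping in the inductive step to be entirely routine.
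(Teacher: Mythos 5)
Your proof is correct, and it follows exactly the argument the paper intends: the paper states this theorem without proof, but its surrounding text already asserts that $w_0, w_{i+1}$ ``correspond'' to $Win_0, Win_{i+1}$ and that realizability is equivalent to $s_0 \in Win(A^s)$, so your inductive correspondence $Z \models w_i \Leftrightarrow Z \in Win_i(A^s)$ plus termination is precisely the missing content. Your careful reconciliation of the total function $\eta$ with the partial function $\delta$ (via $w_i \Rightarrow \neg f$, so that membership in $Win_i$ subsumes definedness) is the one genuinely non-routine point, and you handle it correctly.
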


If $Z_0 \models w$, then we wish to construct a transducer $\T = (2^\X, 2^\Y, Q, s_0, \varrho, \omega)$ representing a winning strategy. We define $Q = \{Z \in 2^\Z \mid Z \models w\}$, $s_0 = Z_0$ and $\varrho(Z, X, Y) = \eta(X, Y, Z)$ if $\eta(X, Y, Z) \in Q$ and undefined otherwise. To construct $\omega$, we can use a boolean-synthesis procedure. Recall that the input to this procedure is a boolean formula $\varphi$, a set of input variables $I$ and a set of output variables $O$. In our case, $\varphi(Z, X, Y) = w(\eta(X, Y, Z))$, $I = \Z \cup \X$ and $O = \Y$. The result of the synthesis is a boolean function $\omega : 2^{\Z \cup \X} \rightarrow 2^\Y$. Then, from the definition of boolean synthesis it follows that if the output is chosen by $\omega$ the game remains in the set of winning states. That is, if $Z \in 2^\Z$ satisfies $w$, then for all $X \in 2^\X$, $\eta(Z, X, \omega(Z \cup X))$ also satisfies $w$.

\section{Experimental Evaluation}

\subsection{Implementation}
\subsubsection{Explicit Approach}
The main algorithm for the explicit approach consists of three steps: DSA construction, Horn formula generation and SAT solving for synthesis. We adopted SPOT~\cite{spot} as the DSA constructor since the output automata should be deterministic. Generating the Horn formula follows the rules described in Section~\ref{sec:horn_sat}. Furthermore, here we used Minisat-2.2~\cite{minisat} for SAT solving. Decoding the variables that are assigned with the truth in the assignment returned by Minisat-2.2~\cite{minisat} is able to generate the strategy if the Safety \LTL formula is realizable with respect to $\langle inputs, outputs \rangle$. 

\subsubsection{Symbolic Encoding}
We implemented the symbolic framework for Safety \LTL synthesis in the \emph{SSyft} tool, which is written in C++ and utilizes the BDD library CUDD-3.0.0~\cite{cudd}. The entire framework consists of two steps: the DSA construction and the safety game over the DSA. In the first step, the dual of the DSA, a DFA is constructed via MONA~\cite{KlaEtAl:Mona} and represented as a~\emph{Shared Multi-terminal BDD}~(ShMTBDD)~\cite{Bryant92,KlaEtAl:Mona}. From this ShMTBDD, we construct a representation of the transition relation $\eta$ by a sequence $\mathcal{B} = \langle B_0,B_1,\ldots,B_{n-1}\rangle$ of BDDs. Each $B_i$, when evaluated over an assignment of $\X \cup \Y$, outputs an assignment to a state variable $z_i \in \Z$. The boolean formula $f$ representing the accepting states of the DFA is likewise encoded as a BDD $B_f$.

To perform the fixpoint computation, we construct a sequence $\langle B_{w_0},B_{w_1},\ldots,B_{w_i}\rangle$ of BDDs, where $B_{w_i}$ is the BDD representation of the formula $w_i$. $B_{w_{i+1}}$ is constructed from $B_{w_i}$ by substituting each state variable $z_i$ with the corresponding BDD $B_i$, which can be achieved by the \emph{Compose} operation in CUDD. Moreover, CUDD  provides the operations \emph{UnivAbstract} and \emph{ExistAbstract} for universal  and existential quantifier elimination respectively. The fixpoint computation benefits from the canonicity of BDDs by checking the equivalence of $B_{w_{i+1}}$ and $B_{w_i}$. To check realizability we use the \emph{Eval} operation. Since in our construction the state variables appear at the top of the BDDs, we use the Input-First boolean-synthesis procedure introduced in~\cite{DrLu.cav16} to synthesize the winning strategy if the game is realizable.

\subsection{Experimental Methodology}
To show the efficiency of the methods proposed in this paper, we compare our tool \emph{SSyft} based on the symbolic framework and the explicit approach, named as Horn\_SAT, with extant \LTL synthesis tools Unbeast~\cite{Unbeast} and Acacia+~\cite{Acacia}. 
Both of the \LTL synthesis tools can use either SPOT~\cite{spot} or LTL2BA~\cite{ltl2ba} for the automata construction. From our preliminary evaluation, both Unbeast and Acacia+ perform better when they construct automata using LTL2BA. As a result, LTL2BA is the default \LTL-to-automata translator of Unbeast and Acacia+ in our experiments. 
All tests are ran on a platform whose operating system is 64-bit Ubuntu 16.04, with a 2.4 GHz CPU (Intel Core i7) and 8 GB of memory. 
The timeout was set to be 60 seconds (s). 

\textbf{\emph{Input Formulas}} 
Our benchmark formulas are collected from~\cite{Unbeast}, called \emph{LoadBalancer}. Since not all cases are safe, here we propose a class of \emph{Expansion Formulas} for safety-property generation. Consider an \LTL formula $\phi$ in \NNF. We use a transformation function $ef(\phi,l)$ that given $\phi$ and a parameter $l$, which represents the expansion length, returns a Safety \LTL formula. The function $ef()$ works in the following way: (1) For each subformula of the form $\phi_1U\phi_2$, expand to $\phi_2 \vee (\phi_1 \wedge X(\phi_1U\phi_2))$ for $l-1$ times; (2) Substitute the remaining $\phi_1U\phi_2$ with $\phi_2$. Note that Safety formulas are Until-free in \NNF, thus for \LTL formulas in \NNF, it is not necessary to deal with the Release operator.  The intuition of the expansion is to bound the satisfied length of $\phi_1U\phi_2$ by adding the  Next$(X)$ operator. The parameter $l$ scales to 5 in our test, for each length there are 79 instances. And 395 cases in total.

\textbf{\emph{Correctness}}
The correctness of our implementation was evaluated by comparing the results from our approaches with those from Acacia+ and Unbeast. For the solved cases, we never encountered an inconsistency.

\subsection{Results}
We evaluated the performance of \emph{SSyft} and Horn\_SAT in terms of the number of solved cases and the running time. Our experiments demonstrate that the symbolic approach we introduced here significantly improves the effectiveness of Safety \LTL synthesis. The safety game has two versions, depending on which player (environment or controller) moves first. Both our tool \emph{SSyft} and Acacia+ are able to handle these two kinds of games, while Unbeast supports only games with the environment moving first. As a result, we only consider the comparison on the environment-moving-first game. We aim to compare the results on two aspects: 1) the scalability on the expansion length; 2) the number of solved cases in the given time limit. 
%For more experiments, we refer to appendix.
%; and 3) the impact of the result of the formula (realizable/unrealizable).

%\textbf{\emph{Environment moves first}} 
Fig.~\ref{fig:enivnum} shows the number of solved cases for each expansion length (1-5)%
\footnote{We recommend viewing the figures online for better readability.}. As shown in the figure, \emph{SSyft} solves approximately twice as many cases as the other three tools. The advantage of \emph{SSyft} diminishes as the expansion length grows, because MONA cannot generate the automata for such cases. Neither of Acacia+ and Unbeast can solve these cases even in a small expansion length. Horn\_SAT performs similarly as \emph{SSyft} when $l=1$, which derives smaller DSA. The performance of Horn\_SAT decreases sharply as the size of the DSA grows, since formula generation dominates the synthesis time. 
%Fig.~\ref{fig:enivtime} displays the accumulated plot of the number of solved cases in a given amount of time. In the figure, \emph{SSyft} solves almost three times as many instances than the other three tools in the first 5 seconds. 
In total, \emph{SSyft} solves a total of 339 cases, while Acacia+, Unbeast and Horn\_SAT solve 182, 132 and 159 cases, respectively.

\iffalse
\begin{figure}[t]
  \centering
  % Requires \usepackage{graphicx}
  \includegraphics[width=3in]{}
  \caption{Comparison of SSyft against Acacia+, Unbeast and the Horn-SAT approach on the number of solved cases in limited time}\label{fig:enivtime}
\end{figure}
\fi

The scatter plot for the total time comparison is shown in Fig.~\ref{fig:enivtotal}, where $+$ plots the data for \emph{SSyft} against Acacia+, $\triangle$ plots the data for \emph{SSyft} against Unbeast and $\circ$ is for Horn\_SAT. Clearly, \emph{SSyft} outperforms the other three tools. 
%There are two key steps in Safety \LTL synthesis, so we also display the scatter plots for the comparison purely on automata construction in Fig.~\ref{fig:enivdfa} and on solving the safety game in Fig.~\ref{fig:enivsyn}. The results shown in Fig.~\ref{fig:enivdfa} confirm the claim that the DSA construction presented in this paper is much more efficient than the regular automata construction in Acacia+ and Unbeast. 
The results shown in Fig.~\ref{fig:enivtotal} confirm the claim that the symbolic approach is much more efficient than Acacia+ and Unbeast. In some cases, Horn\_SAT performs better than \emph{SSyft}, nevertheless in general \emph{SSyft} has a significant advantage. Thus, the evidence here indicates that both the symbolic approach and the explicit method introduced in this paper contribute to the improvement of the overall performance of Safety \LTL synthesis. 

\begin{figure}
  \centering
  \begin{minipage}{0.45\linewidth}
  \centering
  \includegraphics[width=\linewidth]{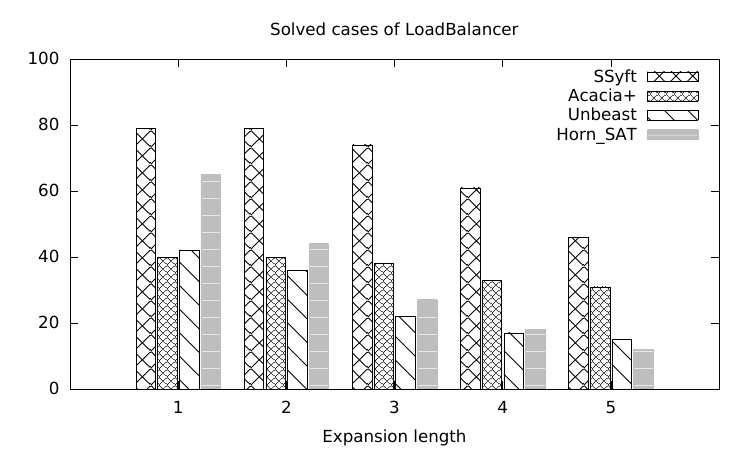}
  \caption{Comparison of SSyft against Acacia+, Unbeast and the Horn\_SAT approach on the number of solved cases as the expansion length grows}\label{fig:enivnum}
  \end{minipage}
  \hspace{0.05\linewidth}
  \begin{minipage}{0.45\linewidth}
  \centering
  \includegraphics[width=\linewidth]{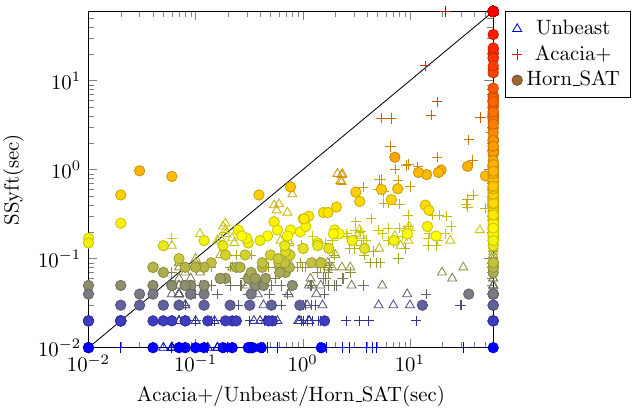}
  \caption{Comparison of SSyft against Acacia+, Unbeast and the Horn\_SAT approach on total solving time}\label{fig:enivtotal}
  \end{minipage}
\end{figure}

\iffalse
\begin{figure}[t]
  \centering
  % Requires \usepackage{graphicx}
  \includegraphics[width=3in]{}
  \caption{Comparison of SSyft against Acacia+, Unbeast and the Horn-SAT approach on automata construction time}\label{fig:enivdfa}
\end{figure}

\begin{figure}
  \centering
  % Requires \usepackage{graphicx}
  \includegraphics[width=3in]{}
  \caption{Comparison of SSyft against Acacia+ and Unbeast on safety game solution time}\label{fig:enivsyn}
\end{figure}
\fi

\section{Concluding Remarks}

We presented here a simple but efficient approach to Safety \LTL
synthesis based on the observation that a minimal DFA can be 
constructed for Co-Safety \LTL formula. Furthermore, a
deterministic safety automaton (DSA) can be generated from the
DFA, and a symbolic safety game can be solved over the DSA. A
comparison with the reduction to Horn-SAT confirms better
scalability of the symbolic approach. Further experiments show
that the new approach outperforms existing solutions for general
\LTL synthesis. Both the DSA construction and the symbolic safety
game solution contribute to the improvement.
%, and our approach is
%more advantageous for solving unrealizable benchmarks because it
%leverages an \emph{early-termination} realizibility-checking
%heuristic.
It will be interesting to apply our approach to the
\emph{safety-first} method~\cite{SohailS09} for \LTL synthesis. 

It should be noted, however, that symbolic DSA construction
cannot avoid the worst case doubly exponential complexity: it can 
only make the synthesis simpler and more efficient in practice.
Our experiments show that the bottleneck is manifested when the
input Safety \LTL formula gets larger, and DSA construction
becomes unachievable within the reasonable time. A promising
solution may be to develop an on-the-fly method to perform the
DSA construction and solve the safety game at the same time. We 
leave this to our future work.

Beyond general \LTL-synthesis approaches, another relevant work 
is on GR(1) synthesis~\cite{gr1}. Although GR(1) synthesis aims
to handle a fragment of general \LTL as well, it is not
comparable to Safety \LTL, since GR(1) does not allow arbitrary
nesting of the Release (R) and Next (X) operators. For that 
reason, our experiments do not cover the comparison between our
approach and GR(1) synthesis. Another work related is synthesis 
of the GXW fragment~\cite{ChengHR16}. In this fragment, input 
formulas are conjuction of certain pattern formulas expressed 
using the temporal connectives $G$, $X$, and $W$. Because of the
limitation to six specific patterns, this fragment is quite less
general that the Safety \LTL fragment studied here. 
%Note, however, that Safety \LTL is more general than the GXW 
%fragment, since the Release operator is more general than both G %and W. 
%??? Is GXW a relevant work? it is not Unitl-free in NNF.

Our work is also related to the safety-synthesis track of the 
Annual Synthesis Competition~(SyntComp). While the Safety-\LTL-synthesis problem can, in principle, be reduced to safety synthesis, the reduction is quite nontrivial. Safety-synthesis
tools from SyntComp take AIGER models\footnote{\url{http://fmv.jku.at/aiger/}} as input, 
while our approach takes Safety \LTL formulas as input. A symbolic DSA can be encoded as an AIGER model by adding additional variables to encode intermediate BDD nodes. As we saw, however, the construction of symbolic DSAs is a very demanding part of Safety \LTL synthesis, with a worst-case doubly exponential complexity, so the usefulness of such a reduction is questionable.

We have shown here a new symbolic approach to Safety
\LTL synthesis, in which a more efficient automata-construction
technique is utilized. Experiments show that our new approach 
outperforms existing solutions to general \LTL synthesis, as well
as a new reduction of safety games to Horn\_SAT.

\\~\\
\noindent\textbf{Acknowledgments.} Work supported in part by NSF grants~CCF-1319459 and~IIS-1527668, NSF Expeditions in Computing project~``ExCAPE: Expeditions in Computer Augmented Program Engineering'', NSFC Projects No.~61572197 and No.~61632005, MOST NKTSP Project~2015BAG19B02, and by the Brazilian agency CNPq through the Ci\^{e}ncia Sem Fronteiras program.

\bibliographystyle{splncs03}
\bibliography{hvc}
\newpage
\appendix
\section{Appendix}
We also compare the approaches in terms of the impact of the result of the formula (realizable/unrealizable). As shown in Figure~\ref{fig:enivrea} and Figure~\ref{fig:enivunrea}, we separate the realizable and unrealizable results to explore how our new approaches perform on each category. The results show that \emph{SSyft} takes an obvious advantage on solving unrealizable cases. This happens because \emph{SSyft} benefits from the \emph{early-termination} realizability-checking heuristic; that is, the checking of whether the initial state $s_0$ is in $Win_i(\S)$ is invoked after each iteration $i\geq 0$ of computing the set of winning states, i.e. $Win_i(\S)$. Note that $s_0$ is in $Win_0(\S)$ initially, since $s_0$ is assumed to be a winning state. If $s_0$ is not in $Win_i(\S)$ for some $i$, $s_0$ is no longer a winning state such that the realizability checking indeed returns unrealizable. The explicit approach, Horn\_SAT, performs similarly as Acacia+ on realizable cases, while on unrealizable cases, Acacia+ has advantageous on Horn\_SAT. In general, Horn\_SAT performs better than Unbeast, although Unbeast has clear advantage on realizable cases.
 
\begin{figure}
  \centering
  % Requires \usepackage{graphicx}
  \includegraphics[width=2.5in]{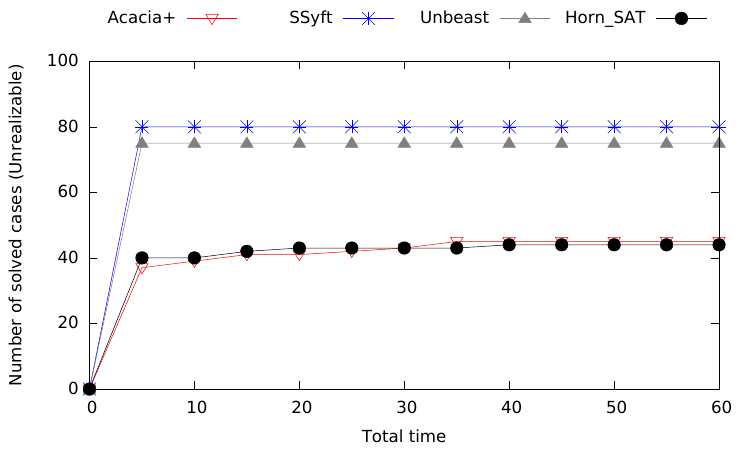}
  \caption{Comparison of SSyft against Acacia+, Unbeast and the Horn-SAT on the number of solved cases in limited time for realizable cases}\label{fig:enivrea}
\end{figure}

\begin{figure}
  \centering
  % Requires \usepackage{graphicx}
  \includegraphics[width=2.5in]{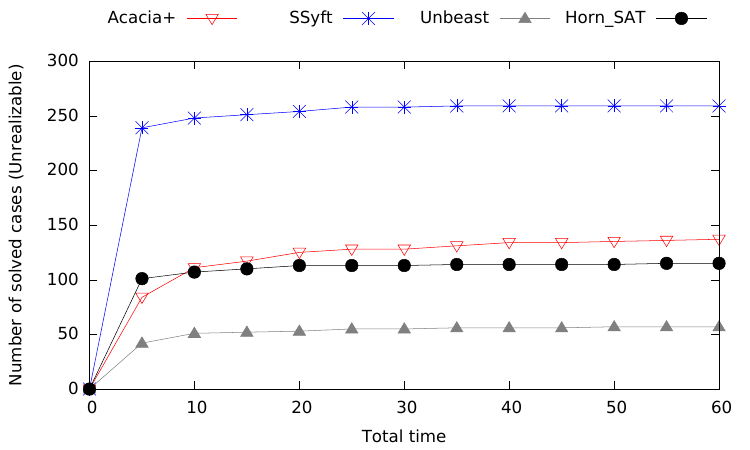}
  \caption{Comparison of SSyft against Acacia+, Unbeast and the Horn-SAT on the number of solved cases in limited time for unrealizable cases}\label{fig:enivunrea}
\end{figure}

\end{document}